\pdfoutput=1	

\documentclass[11pt]{article}

\usepackage{fullpage}
\usepackage{booktabs} 
\usepackage{graphicx}
\usepackage{balance}  
\usepackage{xspace}
\usepackage{subcaption}
\usepackage{color}
\usepackage{enumitem}
\usepackage[linesnumbered, ruled, vlined]{algorithm2e}
\usepackage{quotes}
\usepackage{amsmath,amsthm,amssymb}
\usepackage{listings}
\usepackage{multirow}
\usepackage{float}

\newcommand{\var}{\mathrm{Var}}
\newcommand{\prob}[1]{{\mathrm{Pr}} \left [ #1 \right ]}
\newcommand{\expec}[1]{{\mathbb E}\left [ #1 \right ]}
\newcommand{\std}[1]{{\mathbb S}\left [ #1 \right ]}
\newcommand{\variance}[1]{{\mathbb{VAR}}\left [ #1 \right ]}
\newcommand{\cv}[1]{{\mathbb{CV}}\left [ #1 \right ]}
\newcommand{\error}{\mathrm{error}}

\newtheorem{theorem}{Theorem}

\newtheorem{lemma}{Lemma}

\floatstyle{ruled}
\newfloat{query}{b}{lok}
\floatname{query}{Query~AQ}
\newfloat{query2}{b}{lok}
\floatname{query2}{Query~B}
\lstset{
	language=SQL,
	morekeywords={WITH, CONCAT, COUNT_IF},
	deletekeywords={},
	deletekeywords={VALUE},
	basicstyle=\ttfamily\small,
	keywordstyle=\bfseries
}

\newcommand{\todo}[1]{\textcolor{blue}{}\PackageWarning{TODO:}{#1!}} 
\newcommand{\remove}[1]{}	    
\newcommand{\main}[1]{}				
\newcommand{\arxiv}{\iftrue}		 

\definecolor{red}{rgb}{0,0,0}		
\definecolor{blue}{rgb}{0,0,0}		

\newcommand{\graphwidth}{0.6\columnwidth}

\newcommand{\eg}{\hbox{\emph{e.g.,}}\xspace}
\newcommand{\ie}{\hbox{\emph{i.e.,}}\xspace}

\newcommand{\etc}{\hbox{\emph{etc.}}\xspace}

\newcommand{\uniform}{\texttt{Uniform}\xspace}		
\newcommand{\senate}{\texttt{SENATE}\xspace}		

\newcommand{\sasg}{\texttt{SASG}\xspace}	
\newcommand{\samg}{\texttt{SAMG}\xspace}	
\newcommand{\masg}{\texttt{MASG}\xspace}	
\newcommand{\mamg}{\texttt{MAMG}\xspace}	

\newcommand{\cvopt}{\texttt{CVOPT}\xspace}
\newcommand{\cvoptinf}{\texttt{CVOPT-INF}\xspace}
\newcommand{\cvoptsasg}{\texttt{CVOPT-SASG}}

\newcommand{\cs}{\texttt{CS}\xspace}
\newcommand{\rl}{\texttt{RL}\xspace}

\newcommand{\sps}{\texttt{Sample+Seek}\xspace}

\newcommand{\openaq}{\texttt{OpenAQ}\xspace}
\newcommand{\openaqtb}{\texttt{OpenAQ-25x}\xspace}
\newcommand{\bikes}{\texttt{Bikes}\xspace}




\begin{document}
	
	\title{Random Sampling for Group-By Queries}

	\makeatletter
	\newcommand{\linebreakand}{%
	\end{@IEEEauthorhalign}
	\hfill\mbox{}\par
	\mbox{}\hfill\begin{@IEEEauthorhalign}
	}
	\makeatother
	
	\author{Trong Duc Nguyen$^1$ \and Ming-Hung Shih$^1$ \and 
Sai Sree Parvathaneni$^1$
 \and Bojian Xu$^2$ \and Divesh
  Srivastava$^3$ \and Srikanta Tirthapura$^1$}

\date{%
    $^1$Iowa State
    University, IA 50011, USA\\%
    $^2$ Eastern Washington     University, WA 99004, USA.\\
    $^3$AT\&T Labs--Research, NJ 07921, USA\\
\ \\
   \texttt{trong@iastate.edu\ \  mshih@iastate.edu\ \ 
     ssree@iastate.edu \ \ 
bojianxu@ewu.edu \ \ 
 divesh@research.att.com\ \ snt@iastate.edu}\\
\ \\
    \today
}

	\maketitle
	
	\begin{abstract}
		Random sampling has been widely used in approximate query processing on large databases, due to its potential to significantly reduce resource usage and response times, at the cost of a small approximation error. We consider random sampling for answering the ubiquitous class of group-by queries, which first group data according to one or more attributes, and then aggregate within each group after filtering through a predicate. The challenge with group-by queries is that a sampling method cannot focus on optimizing the quality of a single answer (e.g. the mean of selected data), but must simultaneously optimize the quality of a set of answers (one per group). 
		
		We present \cvopt, a query- and data-driven sampling framework for a set of queries that return multiple answers, e.g. group-by queries. To evaluate the quality of a sample, \cvopt defines a metric based on the norm (e.g. $\ell_2$ or $\ell_\infty$) of the coefficients of variation (CVs) of different answers, and constructs a stratified sample that provably optimizes the metric. \cvopt can handle group-by queries on data where groups have vastly different statistical characteristics, such as frequencies, means, or variances. \cvopt jointly optimizes for multiple aggregations and multiple group-by clauses, and provides a way to prioritize specific groups or aggregates. It can be tuned to cases when partial information about a query workload is known, such as a data warehouse where queries are scheduled to run periodically.  
		
		Our experimental results show that \cvopt outperforms current state-of-the-art on sample quality and estimation accuracy for group-by queries. On a set of queries on two real-world data sets, \cvopt yields relative errors that are $5 \times$ smaller than competing approaches, under the same budget.
	\end{abstract}
\section{Introduction}
\label{sec:intro}
As data size increases faster than the computational resources for query processing, answering queries on large data with a reasonable turnaround time has remained a significant challenge. One solution to handling this data deluge is through random sampling. A sample is a subset of data, collected with the use of randomness to determine which items are included in the sample and which are not. A query posed on the data can be quickly and approximately answered by executing the query on the sample, followed by an appropriate normalization. Sampling is attractive when a good trade-off can be obtained between the size of the sample and the accuracy of the answer.

We investigate random sampling to support ``group-by'' queries. A group-by query partitions an input relation into multiple groups according to the values of one or more attributes, and applies an aggregate function within each group. For instance, on a relation consisting of student records {\tt Student(name, id, year, major, gpa)}, the following SQL query asks for the average {\tt gpa} of each major: 
\begin{lstlisting}
	SELECT   major, AVG(gpa) 
	FROM     Student 
	GROUP BY major
\end{lstlisting}
Group-by queries are very common, especially in data warehouses. For instance, in the TPC-DS benchmark~\cite{tpcds}, 81/ 99 queries have group-by clause. We address the basic question: how to {\color{blue}sample a table} to accurately answer group-by queries?

A simple method is {\em uniform sampling}, where each element is sampled with the same probability. It has been recognized that uniform sampling has significant shortcomings~\cite{CDMN01}. Since a group will be represented in proportion to its volume (i.e. the number of elements in the group), a group whose volume is small may have very few elements selected into the sample or may be missing altogether, while a group whose volume is large may have a large number of elements in the sample. This can clearly lead to high errors for some (perhaps a large fraction of) groups, as confirmed by our experimental study. 

The accuracy of group-by queries can be improved using stratified sampling~\cite{AGP00, CDN-TODS2007,RL-EDBT2009}. Data is partitioned into multiple strata. Uniform sampling is applied within each stratum, but different probabilities may be used across strata. The ``best'' way to use stratified sampling for group-by queries is not obvious -- how to stratify the data into groups, and how to assign a sampling probability to a stratum?


Prior work on ``congressional sampling''~\cite{AGP00} has advocated the use of a fixed allocation of samples to each stratum, irrespective of its size (the ``senate'' strategy)\footnote{More accurately, a hybrid of fixed and population-based allocations.}. Consider the above example, where the table is grouped by the attribute {\tt major}. Suppose the population was stratified so that there is one stratum for each possible value of {\tt major}, and uniform sampling is applied within each stratum. Even for this simple example, congressional sampling can lead to a sub-optimal allocation of samples. Consider two groups, $1$ and $2$ with the same number of elements and the same mean, but group $1$ has a higher variance than group $2$. In estimating the mean of each stratum, allocating equal numbers of samples to each group leads to a worse error for group $1$ than for group $2$. It is intuitively better to give more samples to group $1$ than to group $2$, but it is not clear how many more. {\em Thus far, there has been no systematic study on sample allocation to groups, and our work aims to fill this gap.}


Our high-level approach is as follows. We first define a metric that quantifies the error of an estimate for a group-by query through using the sample, and we obtain an allocation and then a sample that minimizes this metric. Such an optimization approach to sampling has been used before. For instance, the popular ``Neyman allocation''~\cite{Neyman1934,edbt19} assigns samples to strata to {\em minimize the variance of an estimate of the population mean} that can be derived from the sample. The challenge with group-by queries is that there isn't a single estimate to focus on, such as the population mean, but instead, an answer for each group.

Since multiple estimates are desired, we combine the metrics corresponding to different estimates to derive a global metric for the quality of a sample. One possible global metric is the sum of the variances of the different estimators. Such a global metric is inadequate in the case when the means of different groups are vastly different, in which case variances cannot be compared across groups. For instance, consider 2 groups with means $\mu_1 = 1000$ and $\mu_2 = 100$. Suppose an estimator for each group has a variance of $100$. Though both estimators have the same variance, they are of different quality. If data within each group follows a normal distribution, then we can derive that with a probability greater than $96\%$, the estimated mean $y_1$ is between $\left[{950; 1050}\right]$ and the estimated mean $y_2$ is between $\left[50; 150\right]$.  In other words, $y_1 = \mu_1 \pm 5\%$ and $y_2 = \mu_2 \pm 50\%$, and therefore $y_1$ is a much better estimate than $y_2$, in terms of relative accuracy. However, since the global metric adds up the variances, both the estimators contribute equally to the optimization metric. This leads to an allocation that favors groups that have a large mean over groups that have a small mean.

\smallskip

\noindent {\bf Coefficient of Variation (CV):} In order to combine the accuracies of different estimators, perhaps with vastly different expectations, it is better to use the {\em coefficient of variation} (also called "relative variance"). 
The coefficient of variation of a random variable $X$ is defined as $\cv{X} = \frac{\std{X}}{\expec{X}}$, where $\expec{X}$ is the expectation (mean) and $\std{X}$ the standard deviation of $X$, respectively. We assume that the attribute that is being aggregated has a non-zero mean, so the CV is well defined. 
The CV of a random variable $X$ is directly connected to its relative error $r(X) = |X-\expec{X}|/\expec{X}$ as follows. For a given $\epsilon > 0$, using Chebyshev's inequality, we have 
$\prob{r(X) > \epsilon} = \prob{|X-\expec{X}| > \epsilon \expec{X}} \le \frac{\var{X}}{\epsilon^2 \expec{X}^2} = \left(\frac{\cv{X}}{\epsilon}\right)^2$. Smaller the CV, the smaller is the bound for the relative error of a given estimator. Our approach is to choose allocations to different strata so as to optimize a metric based on the CVs of different per-group estimators.


\subsection{Contributions}
\label{sec:contr}
{\bf (1)} We present \cvopt, a novel framework for sampling from data which is applicable to any set of queries that result in multiple answers whose errors need to be simultaneously minimized (a specific example is a group-by query). \cvopt optimizes a cost function based on the weighted aggregate of the CVs of the estimates that are desired by the query.  We present an algorithm (also called \cvopt) that computes a {\em provably optimal} allocation. To our knowledge, this is the first algorithm that results in a provably optimal sample for group-by queries. 
We consider two ways of aggregating CVs -- one based on the $\ell_2$ norm of the CVs, and another based on the $\ell_{\infty}$ norm (the maximum) of the CVs. 

{\bf (2)} \cvopt can be adapted to the following circumstances, in increasing order of generality:

\noindent$\bullet$ With a single aggregation on a single method of grouping attributes (SASG), \cvopt leads to good statistical properties over a range of workloads. The {\em distribution of errors of different groups is concentrated around the mean error}, much more so than prior works~\cite{RL-EDBT2009,AGP00}. As a result, the expected errors of the different per-group estimates are approximately equal, while prior work can lead to some groups being approximated very well while other groups being approximated poorly.

\noindent$\bullet$ With multiple aggregations for the same group-by clause (MASG), and for the general case of multiple aggregations and multiple ways of group-by (MAMG), we provide a way to derive a sample that optimizes for a combined metric on all aggregate queries. A special case is the {\tt Cube} query that is widely used in analytics and decision support workloads.

\noindent$\bullet$ The user is allowed to specify a weight to each answer, allowing her to prioritize different query/group combinations, and use (perhaps uncertain) knowledge about the workload that may be specified by a probability distribution.

{\bf (3)} The samples {\color{red}can be 
  reused to answer various queries that} incorporate
selection predicates that are provided at query time, as well as new
combinations of groupings\footnote{Arbitrary new
  groupings and selection predicates are not supported with a provable
  guarantee. Indeed, we can show no sample that is substantially
  smaller than the data size can provide accurate answers for
  arbitrary queries, since one can reconstruct the original table by
  making such repeated queries.}.


{\bf (4)} We present a detailed experimental study on two
real-world datasets (\openaq and \bikes) that show that \cvopt using
the $\ell_2$ norm of the CVs provides good quality estimates across
all groups in a group-by query, and provides a relative error that is
often up to $5 \times$ smaller than prior
work~\cite{AGP00,sampleandseek,RL-EDBT2009}{\color{red}. Further, 
no prior work can compete with \cvopt as a consistently second best solution for group-by queries.}

\subsection{Prior Work}
\label{sec:related}
The closest work to ours is that of R{\"o}sch and Lehner~\cite{RL-EDBT2009} (which we hence forth call \rl), who propose non-uniform sampling to support group-by queries, where different groups can have vastly different variances. For the case of a single-aggregation and a single group-by, \cvopt is similar to \rl, since \rl also uses the coefficient of variation, in addition to other heuristics. The main difference is that \cvopt provably minimizes the $\ell_2$ norm (or the $\ell_\infty$ norm) of the CVs, whereas \rl is a heuristic without a guarantee on the error, and in fact does not have a well-defined optimization target. The provable guarantees of \cvopt hold even in case of multiple aggregations and/or multiple group-bys, whereas \rl uses a variety of heuristics to handle sample allocation.

Another closely related prior work is ``congressional sampling''~\cite{AGP00} (which we call \cs), which targets sampling for a collection of group-by queries, especially those that consider a ``cube by'' query on a set of group-by attributes. \cs is based on a hybrid of frequency-based allocation (the ``house'') and fixed allocation (the ``senate''). \cvopt differs from \cs through using the coefficients of variation (and hence also the variances) of different strata in deciding allocations, while \cs only uses the frequency. \cvopt results in a provably optimal allocation (even in the case of multiple group-by clauses and aggregates) while \cs does not have such a provable guarantee.

A recent work {\color{blue} ``\sps''}~\cite{sampleandseek} uses a combination of measure-biased sampling and an index to help with low-selectivity predicates. Measure-biased sampling favors rows with larger values along the aggregation attribute. This does not consider the variability within a group in the sampling step -- a group with many rows, each with the same large aggregation value, is still assigned a large number of samples. In contrast, \cvopt favors groups with larger CVs, and emphasizes groups with larger variation in the aggregation attribute. Further, unlike \cvopt, \sps does not provide a sampling strategy that is based on an optimization framework. Our work can potentially be used in conjunction with the index for low-selectivity queries. In our experimental section, we compare with \rl, \cs, and \sps.

All statistics that we use in computing the allocation, including the frequency, mean, and coefficient of variation of the groups, can be computed in a single pass through the data. 
As a result, the overhead of offline sample computation for \cvopt is {\color{red} similar to} that of \cs
and \rl.
We present other related work from the general area of approximate query processing in Section~\ref{sec:other-related}.

%

\smallskip 

\noindent{\bf Roadmap:} We present preliminaries in Section~\ref{sec:prelims}, followed by algorithms for sampling for a single group-by query in Section~\ref{sec:sg} and algorithms for multiple group-by in Section~\ref{sec:mg}. We present an algorithm for different error metrics in Section~\ref{sec:linf}, a detailed experimental study in Section~\ref{sec:experiment}, followed by a survey of other related works, and  
the conclusions.

\section{Preliminaries}
\label{sec:prelims}
For a random variable $X$, let $\expec{X}$ denote its expectation, $\variance{X}$ its variance, $\std{X} = \sqrt{\variance{X}}$ its standard deviation, and $\cv{X} = \frac{\std{X}}{\expec{X}}$ its coefficient of variation. 

The answer to a group-by query can be viewed as a vector of results, one for the aggregate on each group. In case multiple aggregates are desired for each group, the answer can be viewed as a two-dimensional matrix, with one dimension for the groups, and one for the aggregates. In this work, for simplicity of exposition, we focus on the aggregate {\tt AVG}, i.e. the mean. Note that the sample can answer queries involving selection predicates provided at runtime (by simply applying the predicate on the sample) so that it is not possible to precompute the results of all possible queries. Aggregates such as median and variance can be handled using a similar optimization,  and the same sample can be jointly optimized for multiple aggregate functions.

Let the groups of a given query be denoted $1,2,\ldots,r$,~and $\mu_1, \mu_2,\ldots,\mu_r$ denote the mean of value within each group. A sample of a table is a subset of rows of the table. From a sample, we derive estimators $y_i$, of $\mu_i$, for each $i=1\ldots r$. We say that $y_i$ is an unbiased estimator of $\mu_i$ if $\expec{y_i} = \mu_i$. Note that an unbiased estimator does not necessarily mean an estimator that reliably estimates $\mu_i$ with a small error. 

For a group-by query with $r$ groups in data, numbered {\color{blue}$1..r$}, the aggregates can be viewed as an array $\mu = [\mu_1,\mu_2,\ldots,\mu_r]$, the estimates can be viewed as the vector $y = [y_1, y_2, \ldots, y_r]$, and coefficients of variation of the estimates as the vector ${\mathbb C} = \left[\cv{y_1}, \cv{y_2}, \ldots, \cv{y_r}\right]$. We first focus on the overall error equal to the $\ell_2$ norm of the vector ${\mathbb C}$, defined as:
\[
\error(y) = \ell_2({\mathbb C}) =  \sqrt{\sum_{i=1}^r \left(\cv{y_i}\right)^2}
\]

Applying the above metric to the earlier example, we have $\cv{y_1} = 10/1000 = 0.01$, while $\cv{y_2} = 10/100 = 0.1$. This (correctly) evaluates $y_2$ as having a higher contribution to the overall error than $y_1$. If we were to optimize $\ell_2$ norm of the vector $[\cv{y_1}, \cv{y_2}]$, resources would be spent on making $\cv{y_2}$ smaller, at the cost of increasing $\cv{y_1}$. We argue this is the right thing to do, since all results in a group-by query are in some sense, equally important. If we know apriori that some results are more important than others, they can be handled by using a weighting function for results, as we describe further.  We also consider the $\ell_\infty$ norm, defined as: $\mbox{$\ell_\infty({\mathbb C}) =  \max_{i=1}^r \cv{y_i}$}$.


{\color{blue}We can also \textbf{assign weights}} to different results in computing the error. Consider a set of positive real valued numbers, one for each result $i = 1 \ldots r$, $w = \{w_i\}$. The weighted $\ell_2$ metric is:
$\error(y, w) = \ell_2({\mathbb C},w) = \sqrt{\sum_{i=1}^r w_i \cdot \left(\cv{y_i}\right)^2}$,
where a larger $w_i$ indicates a higher accuracy demand for $y_i$.

\remove{

\medskip
\noindent
{\color{red}
{\bf Bounded strata:} We observe that, no master how large the full data is, some strata maybe small, such that the total records is less than the allocated memory. Instead of making assumption about the abundance of the data, we adopt a previous work~\cite{edbt19} to re-allocate the sample space to other strata.
\todo{Will it be better not to introduce a new term without formally defining it in this early section (for reasons: this is not a new contribution of this paper; may cause confusion to readers; increase the paper length.) ? Rather, we give the brief reason, and cite the paper if needed, in the exp section only?}
}

}
\section{Single Group-by}
\label{sec:sg}
\newcommand{\universeA}{\mathcal{A}}
\newcommand{\universeB}{\mathcal{B}}
\newcommand{\universeC}{\mathcal{C}}
\newcommand{\strata}{\mathcal{S}}

\subsection{Single Aggregate, Single Group-By}
\label{sec:sasg}
The first case is when we have a single aggregate query, along with a single group-by clause. Note that grouping does not have to use a single attribute, but could use multiple attributes. For example, {\tt \textbf{SELECT} year,major, \textbf{AVG}(gpa) \textbf{FROM} Student \textbf{GROUP BY} year,major}. {\em Given a budget of sampling $M$ records from a table for a group-by query with $r$ groups, how can one draw a random sample such that the accuracy is maximized?}

We use stratified sampling. In the case of a single group-by clause, {\em stratification directly corresponds to the grouping}. There is a stratum for each group, identified by a distinct value of the combination of group-by attributes. In the above example, there is one stratum for each possible combination of the {\tt (year, major)} tuple. Probabilities of selecting a record from the table are not necessarily equal across different strata, but are equal within a stratum. 

One simple solution, which we call as \senate (used as a component in \cs~\cite{AGP00}), is to split the budget of $M$ records equally among all strata, so that each stratum receives $M/r$ samples. While this is easy to implement and improves upon uniform sampling, this solution has the following drawback. Consider two groups $1$ and $2$ with the same means $\mu_1 = \mu_2$, but with very different standard deviations within the groups, i.e. $\sigma_1 \gg \sigma_2$. Intuitively, it is useful to give more samples to group $1$ than to group 2, to reduce the variance of the estimate within group $1$. However, \senate gives the same number of samples to both, due to which the expected quality of the estimate for group $1$ will be much worse than the estimate for group 2. Intuitively, we need to give more samples to group 1 than group 2, but exactly how much more -- this is answered using our optimization framework.

Before proceeding further, we present a solution to an optimization problem that is repeatedly used in our work. 
\begin{lemma}
\label{lem:opt-helper}
Consider positive valued variables $\mbox{$s_1,s_2,\ldots,s_k$}$ and positive constants $M$, $\alpha_1,\alpha_2,\ldots,\alpha_k$. The solution to the optimization problem: minimize $\sum_{i=1}^k \frac{\alpha_i}{s_i}$ subject to \\
$\mbox{$\sum_{i=1}^k s_i \le M$}$ is given by $s_i = M \cdot \frac{\sqrt{\alpha_i}}{\sum_{j=1}^k \sqrt{\alpha_j}}$.
\end{lemma}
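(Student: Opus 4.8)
The plan is to use Lagrange multipliers, after first arguing that the inequality constraint is tight at the optimum. Since each term $\alpha_i/s_i$ is strictly decreasing in $s_i$ (because $\alpha_i > 0$), increasing any $s_i$ strictly decreases the objective. Therefore at a minimizer we must have $\sum_{i=1}^k s_i = M$; otherwise we could increase some $s_i$, stay feasible, and lower the objective. This reduces the problem to minimizing $\sum_{i=1}^k \alpha_i/s_i$ subject to the equality constraint $\sum_{i=1}^k s_i = M$.

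Next I would form the Lagrangian $L = \sum_{i=1}^k \frac{\alpha_i}{s_i} + \lambda\left(\sum_{i=1}^k s_i - M\right)$ and set partial derivatives to zero. Differentiating with respect to $s_i$ gives $-\frac{\alpha_i}{s_i^2} + \lambda = 0$, so $s_i^2 = \alpha_i/\lambda$, i.e. $s_i = \sqrt{\alpha_i}/\sqrt{\lambda}$ (taking the positive root, consistent with the positivity of the variables). Substituting this back into the constraint $\sum_{i=1}^k s_i = M$ yields $\frac{1}{\sqrt{\lambda}}\sum_{j=1}^k \sqrt{\alpha_j} = M$, which determines $\frac{1}{\sqrt{\lambda}} = M/\sum_{j=1}^k \sqrt{\alpha_j}$. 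Plugging this into the expression for $s_i$ gives exactly the claimed formula $s_i = M \cdot \frac{\sqrt{\alpha_i}}{\sum_{j=1}^k \sqrt{\alpha_j}}$.

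Finally, I would confirm that this stationary point is indeed the global minimum rather than a saddle or maximum. The cleanest way is to note that the objective $f(s) = \sum_i \alpha_i/s_i$ is convex on the positive orthant (each $\alpha_i/s_i$ has second derivative $2\alpha_i/s_i^3 > 0$, so the Hessian is diagonal with positive entries), and the feasible region defined by the linear equality constraint is convex. For a convex objective over a convex set, any stationary point of the Lagrangian is a global minimizer, so the unique critical point found above is the solution. The main obstacle here is not the computation, which is routine, but ensuring the argument is airtight on two fronts: justifying that the optimum lies in the interior of the positive orthant (so we may use calculus rather than worrying about boundary behavior where some $s_i \to 0$ drives the objective to $+\infty$, which actually helps us), and invoking convexity to promote the local stationary condition to a global guarantee. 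An alternative that avoids Lagrange multipliers entirely is the Cauchy--Schwarz inequality: writing $\left(\sum_i \sqrt{\alpha_i}\right)^2 = \left(\sum_i \frac{\sqrt{\alpha_i}}{\sqrt{s_i}}\cdot\sqrt{s_i}\right)^2 \le \left(\sum_i \frac{\alpha_i}{s_i}\right)\left(\sum_i s_i\right) \le M\sum_i \frac{\alpha_i}{s_i}$, which immediately gives the lower bound $\sum_i \alpha_i/s_i \ge \frac{1}{M}\left(\sum_i \sqrt{\alpha_i}\right)^2$, with equality precisely when $s_i \propto \sqrt{\alpha_i}$; normalizing to meet the budget recovers the same formula.
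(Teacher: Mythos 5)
Your proposal takes essentially the same route as the paper's proof: the paper also forms the Lagrangian $L = \sum_{i=1}^k \frac{\alpha_i}{s_i} + \lambda\left(\sum_{i=1}^k s_i - M\right)$, sets $\frac{\partial L}{\partial s_i} = 0$ to get $s_i = \sqrt{\alpha_i}/\sqrt{\lambda}$, and solves for $\lambda$ via the budget constraint. Your additions --- justifying that the inequality constraint is tight, invoking convexity to certify the stationary point is a global minimum, and the Cauchy--Schwarz alternative --- are all correct and in fact make the argument more airtight than the paper's, which stops at the stationary-point computation.
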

\begin{proof}
Let $s=[s_1,s_2, \ldots, s_k]$. Let $f(s)=\sum_{i=1}^k \frac{\alpha_1}{s_i}$, and $g(s)=(\sum_{i=1}^k s_i) - M$. We want to minimize $f(s)$ subject to the constraint $g(s)=0$. Using Lagrange multipliers: 
\begin{align*}
L(s_1,s_2,...,s_k,\lambda) & = f(s_1,s_2,...,s_k) + \lambda g(s_1,s_2,...,s_k) \\
                           & =\sum_{i=1}^k \frac{\alpha_i}{s_i}+\lambda\left(\sum_{i=1}^{r} s_i - M\right) 
\end{align*}
For each $i=1\ldots r$, we set $\frac{\partial L}{\partial s_i}=0$. Thus  $-\frac{\alpha_i}{s_i^2}+\lambda = 0$, leading to $s_i=\frac{\sqrt{\alpha_i}}{\sqrt{\lambda}}$. By setting $g(s)=0$, we solve for $\lambda$ and get $s_i = M \cdot \frac{\sqrt{\alpha_i}}{\sum_{j=1}^k \sqrt{\alpha_j}}$.
\end{proof}

We now consider how to find the best assignment of sample sizes to the different strata, using an optimization framework. Let $s = [s_1,s_2,\ldots,s_r]$ denote the vector of assignments of sample sizes to different strata. 
\begin{theorem}
\label{thm:sasg}
For a single aggregation and single group-by, given weight vector $w$ and sample size $M$, the optimal assignment of sample sizes is to assign to group $i \in \{1,2,\ldots,r\}$ a sample size $s_i = M \frac{\sqrt{w_i}\sigma_i/\mu_i}{\sum_{j=1}^r \sqrt{w_j}\sigma_j/\mu_j}$ 
\end{theorem}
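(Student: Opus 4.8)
The plan is to reduce the theorem to the already-established Lemma~\ref{lem:opt-helper} by rewriting the weighted error metric as a sum of terms of the form $\alpha_i/s_i$, and then reading off the solution.

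First I would fix the estimator precisely. Within stratum $i$ we draw $s_i$ records uniformly at random and let $y_i$ be the sample mean of the aggregation attribute over those records; this is the natural unbiased estimator, so $\expec{y_i}=\mu_i$. For a sample of size $s_i$ drawn from a group with per-element variance $\sigma_i^2$, the variance of the sample mean is $\variance{y_i}=\sigma_i^2/s_i$ (treating the population as large, so that the finite-population correction is negligible; equivalently, sampling with replacement). Hence $\std{y_i}=\sigma_i/\sqrt{s_i}$, and using the standing assumption $\mu_i\neq 0$,
\[
\cv{y_i}=\frac{\std{y_i}}{\expec{y_i}}=\frac{\sigma_i}{\mu_i\sqrt{s_i}},\qquad \left(\cv{y_i}\right)^2=\frac{\sigma_i^2}{\mu_i^2\, s_i}.
\]

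Next I would substitute this into the weighted $\ell_2$ metric. Since minimizing $\error(y,w)=\sqrt{\sum_i w_i(\cv{y_i})^2}$ is equivalent to minimizing its square (the square root is monotone), and both the objective and the budget constraint depend only on the $s_i$, it suffices to minimize
\[
\error(y,w)^2=\sum_{i=1}^r w_i\left(\cv{y_i}\right)^2=\sum_{i=1}^r \frac{w_i\sigma_i^2/\mu_i^2}{s_i}=\sum_{i=1}^r\frac{\alpha_i}{s_i},
\]
where $\alpha_i := w_i\sigma_i^2/\mu_i^2$ is a positive constant (using $w_i,\sigma_i,\mu_i>0$). This is exactly the problem of Lemma~\ref{lem:opt-helper} with $k=r$ and these $\alpha_i$, under $\sum_i s_i\le M$; note the constraint is tight at the optimum, since decreasing any $s_i$ only increases the objective. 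Applying the lemma and simplifying $\sqrt{\alpha_i}=\sqrt{w_i}\,\sigma_i/\mu_i$ yields
\[
s_i=M\frac{\sqrt{\alpha_i}}{\sum_{j=1}^r\sqrt{\alpha_j}}=M\frac{\sqrt{w_i}\,\sigma_i/\mu_i}{\sum_{j=1}^r\sqrt{w_j}\,\sigma_j/\mu_j},
\]
which is precisely the claimed allocation.

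I expect the only real subtlety to be the variance identity $\variance{y_i}=\sigma_i^2/s_i$: it holds exactly only under with-replacement (or large-population) sampling, so I would state that assumption explicitly rather than let it pass silently. Everything after that is a direct algebraic substitution followed by an appeal to Lemma~\ref{lem:opt-helper}, so no further optimization argument is needed.
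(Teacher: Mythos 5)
Your proposal is correct and follows essentially the same route as the paper: substitute the CV of the per-stratum sample mean into the squared weighted $\ell_2$ metric, obtain an objective of the form $\sum_i \alpha_i/s_i$ with $\alpha_i = w_i\sigma_i^2/\mu_i^2$, and invoke Lemma~\ref{lem:opt-helper}. The one place you diverge is the variance identity, and here the subtlety you flag can be resolved exactly rather than assumed away: the paper uses the exact without-replacement formula
\begin{equation*}
\variance{y_i} \;=\; \frac{\sigma_i^2(n_i-s_i)}{n_i s_i} \;=\; \frac{\sigma_i^2}{s_i} - \frac{\sigma_i^2}{n_i},
\end{equation*}
so that $w_i\left(\cv{y_i}\right)^2 = \frac{\alpha_i}{s_i} - \frac{w_i\sigma_i^2}{n_i\mu_i^2}$, and the finite-population correction contributes only an additive constant independent of the $s_i$, which drops out of the minimization. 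Hence no with-replacement or large-population assumption is needed; this is worth doing exactly, since the paper's algorithm draws samples without replacement (reservoir sampling per stratum) and some strata can be small relative to their allocation, so ``population large'' is not an innocuous hypothesis in this setting.
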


\begin{proof}
Consider the estimators $\mbox{$y = [y_1,y_2,\ldots,y_r]$}$ computed using the sample. Our objective is the $\ell_2$ error, which requires us to minimize $\mbox{$\sqrt{\sum_{i=1}^r w_i \cdot \left(\cv{y_i}\right)^2}$}$, which is equivalent to minimizing $\sum_{i=1}^r w_i \cdot \left(\cv{y_i}\right)^2$. 

\noindent
The standard deviation of $y_i$ depends on $n_i$, the size of the~$i$th group, $s_i$, the size of the sample assigned to the $i$th group,~and $\sigma_i$, the standard deviation of the values in the $i$th group. By standard results on sampling, e.g. Theorem 2.2 in \cite{Coch77:book}, we have 
\begin{equation*}
\std{y_i} = \sqrt{\frac{\sigma_i^2(n_i-s_i)}{n_i s_i}}, \cv{y_i} = \frac{1}{\mu_i} \sqrt{\frac{\sigma_i^2(n_i-s_i)}{n_i s_i}}
\end{equation*}
Thus, we reduce the problem to minimizing $\sum_{i=1}^r \frac{w_i}{\mu_i^2} \cdot \frac{\sigma_i^2(n_i-s_i)}{n_i s_i}$. Since $r$, $\sigma_i$, $\mu_i$, and $n_i$ are fixed, this is equivalent to minimizing $\sum_{i=1}^r \frac{w_i \sigma_i^2/\mu_i^2}{s_i}$ subject to the condition $\sum_{i=1}^{r} s_i = M$.
Using Lemma~\ref{lem:opt-helper}, and setting $\alpha_i = w_i \sigma_i^2/\mu_i^2$, we see that $s_i$ should be proportional to $\sqrt{w_i} \sigma_i/\mu_i$.
\end{proof}

The above leads to algorithm $\cvoptsasg$ for drawing a random sample from a table $T$, described in Algorithm~\ref{algo:cvopt-sasg}.

\begin{algorithm}[t]
\small
\DontPrintSemicolon

\KwIn{Database Table $T$, group-by attributes $A$, aggregation attribute $d$, weight vector $w$, memory budget $M$.}
\KwOut{Stratified Random Sample $S$}

Let $\universeA$ denote all possibilities of assignments to $A$ that actually occur in $T$. i.e. all strata. Let $r$ denote the size of $\universeA$, and suppose the strata are numbered from $1$ till $r$\\

For each $i = 1\ldots r$, compute the mean and variance of all elements in stratum $i$ along attribute $d$, denoted as $\mu_i, \sigma_i$ respectively. Let $\gamma_i \gets \sqrt{w_i} \sigma_i/\mu_i$\\

$\gamma \gets \sum_{i=1}^r \gamma_i$ \\

\For{$i=1 \ldots r$}
{ $s_i \gets M \cdot \gamma_i/\gamma$ \;
  Let $S_i$ be formed by choosing $s_i$ elements from stratum $i$ uniformly without replacement, using reservoir sampling\;
}

\KwRet{$S = [S_1,S_2,\ldots, S_r]$}
\caption{\cvoptsasg: Algorithm computing a random sample for a single aggregate, single group-by.}
\label{algo:cvopt-sasg}
\end{algorithm}

\remove{
\begin{table}[th]
	\centering
	\caption{Allocations of different algorithms, using the same memory budget of $200$, table size $400$.}
	\label{tab:allocation}
	\begin{tabular}{|c|c|c|c|c|c|c|}
	\hline
		Stratum $i$ & $1$ & $2$ & $3$ & $4$ & $5$ & $6$\\ 
		\hline\hline
		CV $\frac{\sigma_i}{\mu_i}$ & $10$ & $8$ & $30$ & $20$ & $8$ & $24$\\ 
		\hline
		Frequency $f_i$ & $15$ & $50$ & $50$ & $45$ & $60$ & $180$\\ 
		\hline\hline
		\uniform & $8$ & $25$ & $25$ & $23$ & $30$ & $90$ \\ 
		\hline
		\cs~\cite{AGP00}  & $13$ & $28$ & $28$ & $28$ & $28$ & $76$\\ 
		\hline
		\rl~\cite{RL-EDBT2009}  & $13$ & $16$ & $50$ & $40$ & $16$ & $48$\\ 
		\hline
		\cvopt (our work) & $15$ & $18$ & $50$ & $45$ & $18$ & $54$\\ 
		\hline
	\end{tabular}
\end{table}

Table~\ref{tab:allocation} shows an example of allocations. Given the same data set that contains 6 strata, $i = 1 \ldots 6$, and the statistics of each stratum, suppose that each algorithm is given a memory budget of $200$. The table shows the difference in allocation strategies. \uniform gives each record the same probability, so that each stratum gets samples proportional to its frequency. 
\cs~\cite{AGP00} is a hybrid of \senate, which allocates an equal
number of samples to each stratum, and \uniform (called ``House''
in~\cite{AGP00}), which allocates samples proportional to the
frequency.  \rl~\cite{RL-EDBT2009} takes into account diversity of the
data, but however assumes that each stratum has an  {\color{red} abundance} of
elements to choose from, which does not hold in general. \cvopt not
only takes into account the diversity of the data using a provable
optimization framework, but also does not make assumptions about the
sizes of different strata.
We present an experimental comparison of these algorithms in Section~\ref{sec:experiment}.
}

\subsection{Multiple Aggregates, Single Group-by}
\label{sec:masg}
We next consider the case of multiple aggregations using the same group-by clause.  Without loss of generality, suppose the columns that were aggregated are columns $1,2,\ldots,t$. As before, suppose the groups are numbered $1,2,\ldots,r$. For group $i, 1 \le i \le r$ and aggregation column $j, 1 \le j \le t$, let $\mu_{i,j}, \sigma_{i,j}$ respectively denote the mean and standard deviation of the values in column $j$ within group $i$. Let $n_i$ denote the size of group $i$, and $s_i$ the number of samples drawn from group $i$. Let $y_{i,j}$ denote the estimate of $\mu_{i,j}$ obtained through the sample, and $\cv{y_{i,j}} = \frac{\std{y_{i,j}}}{\mu_{i,j}}$ denote the coefficient of variation of $y_{i,j}$. Further suppose that we are given weights for each combination of group and aggregate, which reflect how important these are to the user \footnote{In the absence of user-input weights, we can assume default weights of 1.}. Let $w_{i,j}$ denote the weight of the combination group $i$ and aggregation column~$j$. Our minimization metric is a weighted combination of the coefficients of variation of all the $r \cdot t$ estimates, one for each group and aggregate combination. 
\[
\error(y,w) = \sqrt{\sum_{j=1}^t \sum_{i=1}^r w_{i,j} \cdot \left(\cv{y_{i,j}} \right)^2}
\]

\begin{theorem}
\label{thm:masg}
Given weights $w$, and total sample size $M$, the optimal assignment of sample sizes $s$ among $r$ groups is to assign to group $i=1,2,\ldots, r$  sample size $s_i = M \frac{\sqrt{\alpha_i}}{\sum_{i=1}^r \sqrt{\alpha_i}}$, where $\alpha_i = \sum_{j=1}^t \frac{w_{i,j} \sigma_{i,j}^2}{\mu_{i,j}^2}$.
\end{theorem}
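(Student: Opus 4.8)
The plan is to follow the same route as the proof of Theorem~\ref{thm:sasg}, exploiting the fact that here the stratification is driven purely by the grouping, so a single sample size $s_i$ governs \emph{all} $t$ estimates $y_{i,1},\ldots,y_{i,t}$ drawn from group $i$. First I would square the objective, since minimizing $\error(y,w)$ is equivalent to minimizing $\error(y,w)^2 = \sum_{j=1}^t\sum_{i=1}^r w_{i,j}\left(\cv{y_{i,j}}\right)^2$.

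Next I would substitute the standard sampling expression for the coefficient of variation. Because group $i$ contributes $s_i$ samples irrespective of which aggregate column is being estimated, the formula already quoted in Theorem~\ref{thm:sasg} (Theorem 2.2 of \cite{Coch77:book}) applies column-by-column and gives $\left(\cv{y_{i,j}}\right)^2 = \frac{\sigma_{i,j}^2(n_i-s_i)}{\mu_{i,j}^2 n_i s_i}$. Substituting and swapping the order of summation lets me pull the factor $\frac{n_i - s_i}{n_i s_i}$, which depends only on the group index $i$, outside the sum over $j$, collapsing the double sum into $\sum_{i=1}^r \frac{n_i-s_i}{n_i s_i}\,\alpha_i$ with $\alpha_i = \sum_{j=1}^t \frac{w_{i,j}\sigma_{i,j}^2}{\mu_{i,j}^2}$. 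This collapsing step — recognizing that the shared per-group allocation couples the aggregates and produces exactly the combined coefficient $\alpha_i$ — is the crux of the argument and the only place where the multi-aggregate structure actually enters.

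Then I would write $\frac{n_i-s_i}{n_i s_i} = \frac{1}{s_i} - \frac{1}{n_i}$ and drop the constant term $\sum_i \alpha_i/n_i$, reducing the problem to minimizing $\sum_{i=1}^r \frac{\alpha_i}{s_i}$ subject to $\sum_{i=1}^r s_i = M$. Finally I would invoke Lemma~\ref{lem:opt-helper} with these $\alpha_i$, which immediately yields $s_i = M\frac{\sqrt{\alpha_i}}{\sum_{j=1}^r\sqrt{\alpha_j}}$, matching the claimed allocation.

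I expect no genuine obstacle, as the argument is essentially a reduction to the already-proved single-aggregate case. The one point to state carefully is the justification that we optimize over a single $s_i$ per group, rather than a separate allocation per group-aggregate pair: this is forced because a stratified sample partitions by group, and every drawn row supplies values for all $t$ aggregated columns at once, so the aggregates cannot be sampled independently of one another.
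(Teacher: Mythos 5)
Your proposal is correct and follows essentially the same route as the paper's proof: square the weighted $\ell_2$ objective, substitute $\left(\cv{y_{i,j}}\right)^2 = \frac{\sigma_{i,j}^2(n_i-s_i)}{\mu_{i,j}^2 n_i s_i}$, collapse the double sum into $\sum_{i=1}^r \alpha_i/s_i$ after discarding the constant terms, and invoke Lemma~\ref{lem:opt-helper}. Your explicit remark that a single $s_i$ per group governs all $t$ aggregates (since each sampled row supplies all columns) is left implicit in the paper but is exactly the structural fact its derivation relies on.
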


\main{We omit the proof, that can be found here~\cite{NSPXST-ARXIV2019}. }

\arxiv{
\begin{proof}
We use an approach similar to Theorem~\ref{thm:sasg}.
Let $y = \{y_{i,j}\} 1 \le i \le r, 1 \le j \le t$ denote the matrix of estimators for the means of the multiple aggregates, for different groups. Using the metric of $\ell_2$ error, we have to minimize $\ell_2(\cv{y},w) = \sqrt{\sum_{i=1}^r \sum_{j=1}^t w_{i,j} \cdot \left(\cv{y_{i,j}}\right)^2}$, which is equivalent to minimizing $\sum_{i=1}^r \sum_{j=1}^t w_{i,j} \cdot \left(\frac{\std{y_{i,j}}}{\mu_{i,j}}\right)^2$. 

We note that $\std{y_{i,j}} = \sqrt{\frac{\sigma_{i,j}^2(n_i-s_i)}{n_i s_i}}$, where $\sigma_{i,j}$ is the standard deviation of the $j$th column taken over all elements in group $i$.
Thus, we reduce the problem to minimizing $\sum_{i=1}^r \sum_{j=1}^t w_{i,j} \cdot \frac{\sigma_{i,j}^2(n_i-s_i)}{n_i s_i} \cdot \frac{1}{\mu_{i,j}^2}$. Since $r$, $t$, $\sigma_{i,j}$, $\mu_{i,j}$, and $n_i$ are fixed, this is equivalent to minimizing:

\begin{equation}
\sum_{i=1}^r \sum_{j=1}^t w_{i,j} \frac{\sigma_{i,j}^2}{\mu_{i,j}^2s_i} =  \sum_{i=1}^r \frac{1}{s_i} \sum_{j=1}^t \frac{w_{i,j} \sigma_{i,j}^2}{\mu_{i,j}^2} = \sum_{i=1}^r \frac{\alpha_i}{s_i}
\end{equation}
subject to the condition $\sum_{i=1}^{r} s_i \le M$.
Using Lemma~\ref{lem:opt-helper}, we arrive that the optimal assignment is $s_i = M \frac{\sqrt{\alpha_i}}{\sum_{i=1}^r \sqrt{\alpha_i}}$.
\end{proof}
}

In our formulation, a weight can be assigned to each result in the output, reflecting how important this number is. For instance, if there are $r$ groups and $t$ aggregates desired, then there are $r \times t$ results in the output, and the user can assign a weight for each result, $w_{ij}$ for $i=1 \ldots r$ and $j = 1\ldots t$. A useful special case is when all weights are equal, so that all results are equally important to the user. If the user desires greater accuracy for group 1 when compared to group 2, this can be done by setting weights $w_{1,*}$ to be higher than the weights $w_{2,*}$, say 10 versus 1. The value of weight can also be deduced from a query workload, as we discuss in Section~\ref{sec:workload}.

\section{Multiple Group-Bys}
\label{sec:mg}
\newcommand{\project}{\Pi}
Suppose that we had multiple attribute combinations on which there are group-by clauses. For instance, we may have a query where the student data is being grouped by {\tt major}, and one query where it is being grouped by {\tt year}, and another query where data is grouped by {\tt major} as well as {\tt year}. The additional challenge now is that there are multiple ways to stratify the data, to further apply stratified sampling. For instance, we can draw a stratified sample where data are stratified according to {\tt major} only, or one where data are stratified according to {\tt year}, or one where data are stratified according to both {\tt major} and {\tt year}. Any of these three samples can be used to answer all three queries, but may lead to high errors. For instance, a stratified sample where data is stratified according to year of graduation may lead to poor estimates for a group-by query based on {\tt major}, since it may yield very few tuples or may completely miss some majors. 

Our solution is to pursue a ``finest stratification'' approach where the population is stratified according to the union of all group-by attributes. In the above example, this leads to stratification according to a combination of {\tt major} and {\tt year}, leading to one stratum for each distinct value of the pair {\tt (year,major)}. This will serve group-by queries based solely on {\tt major} or {\tt year}, or a combination of both. The number of samples assigned to each stratum in such a stratification is determined in a principled manner. 
\subsection{Single Aggregate, Multiple Group-By}
\label{sec:sa2g}
We first consider the case of a single aggregate and multiple group-bys, starting with the case of two group-bys and then extend to more than two group-bys. Suppose two queries $Q_1$ and $Q_2$ that aggregate on the same column, using different sets of group-by attributes, $A$ and $B$, respectively. Note that $A$ and $B$ need not be disjoint. For example $A$ can be ({\tt major}, {\tt year}) and $B$ can be ({\tt major}, {\tt zipcode}). If we combined the sets of group-by attributes, we get attribute set $C = A \cup B$. In the above example, $C$ is ({\tt major}, {\tt year}, {\tt zipcode}). Let $\universeA, \universeB, \universeC$ denote the set of all values possible for attributes in $A$, $B$, and $C$ respectively. Note that only those combinations that actually occur within data are considered. 

Our algorithm based on finest stratification stratifies data according to attribute set $C$, leading to a stratum for each~combination of the values of attributes $c \in \universeC$. Samples are~chosen uniformly within a single stratum, but the sampling probabilities in different strata may be different. \emph{Our goal is {\bf not} to get a high-quality estimate for aggregates within each stratum according to $\universeC$. Instead, {\color{blue}it} is to get a high-quality estimate for aggregates for each group in $\universeA$ (query $Q_1$) and~in $\universeB$ (query $Q_2$).} We translate the above goal into an objective function that will help assign sample sizes to each stratum in $\universeC$. 

For each stratum $c \in \universeC$, let $s_c$ denote the number of samples assigned to this stratum, $S_c$ the sample, $\mu_c$ the mean of the aggregation column, and $\sigma_c$ the standard deviation of the aggregation column. Let the sample mean for this stratum be denoted as $y_c = \frac{\sum_{v \in S_c} v}{s_c}$. 
As $C = A \cup B$, $A \in C$. For an assignment $a \in \universeA$ and an assignment $c \in \universeC$, we say $c \in a$ if the attributes in set $A$ have the same values in $a$ and $c$. Let $\universeC(a)$ denote the set of all $c \in \universeC$ such that $c \in a$. For any $c \in \universeC$, let $\project(c,A)$ denote the unique $a \in \universeA$ such that $c \in \universeC(a)$. Similarly, define $\project(c,B)$. 


For query $Q_1$, for group $a \in \universeA$, let $\mu_a$ denote the mean of aggregate column, and $n_a$ denote the size of the group. We desire to estimate $\mu_a$ for each $a \in \universeA$. Suppose the estimate for $\mu_a$ is $y_a$. Similarly, we define $\mu_b$, $n_b$, and $y_b$ for each group $b \in \universeB$. Our objective function is the weighted $\ell_2$ norm of $\{\cv{y_a} | a \in \universeA\} \cup \{\cv{y_b} | b \in \universeB\}$, i.e. 
\[
\sqrt{ \sum_{a \in \universeA} w_a\cdot (\cv{y_a})^2 + \sum_{b \in \universeB} w_b \cdot (\cv{y_b})^2 }
\]

The estimates for each group are derived as $y_a = \frac{\sum_{c \in \universeC(a)} n_c y_c}{\sum_{c \in \universeC(a)} n_c}$, and similarly $y_b = \frac{\sum_{c \in \universeC(b)} n_c y_c}{\sum_{c \in \universeC(b)} n_c}$. Using standard results from stratified sampling~\cite{Coch77:book}, we have:
$\expec{y_a} = \mu_a$, and
\[\variance{y_a} = \frac{1}{n_a^2} \sum_{c \in \universeC(a)} \left[ \frac{n_c^2 \sigma_c^2}{s_c} - n_c\sigma_c^2 \right]\]
\begin{lemma}
\label{lem:samg}
The optimal assignment of sample sizes that minimizes the weighted $\ell_2$ norm of the coefficients of variation of the estimates is: for $d \in \universeC$ the sample size is $s_d = M \cdot \frac{\sqrt{\beta_d}}{\sum_{c \in \universeC} \sqrt{\beta_c}}$, where:
\[\beta_c = n_c^2 \sigma_c^2 \left[\frac{w_{\project(c,A)}}{n_{\project(c,A)}^2 \mu_{\project(c,A)}^2} + \frac{w_{\project(c,B)}}{n_{\project(c,B)}^2 \mu_{\project(c,B)}^2} \right]\]
\end{lemma}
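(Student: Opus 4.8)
The plan is to follow the same template as Theorems~\ref{thm:sasg} and~\ref{thm:masg}: reduce the minimization to an instance of Lemma~\ref{lem:opt-helper} of the form $\min \sum_c \alpha_c/s_c$ subject to $\sum_c s_c \le M$. Since the square root is monotone, minimizing the stated objective is equivalent to minimizing its square, $\sum_{a \in \universeA} w_a (\cv{y_a})^2 + \sum_{b \in \universeB} w_b (\cv{y_b})^2$. Writing $(\cv{y_a})^2 = \variance{y_a}/\mu_a^2$, the objective becomes $\sum_{a} \frac{w_a}{\mu_a^2}\variance{y_a} + \sum_{b} \frac{w_b}{\mu_b^2}\variance{y_b}$, so the only quantities that depend on the decision variables $\{s_c\}$ are the variance terms.

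Next I would substitute the given variance expression $\variance{y_a} = \frac{1}{n_a^2}\sum_{c \in \universeC(a)}\left[\frac{n_c^2\sigma_c^2}{s_c} - n_c\sigma_c^2\right]$. The term $n_c\sigma_c^2$ is independent of $s_c$, so its contribution is an additive constant that does not affect the argmin and can be discarded. After this, the $\universeA$-part of the objective reads $\sum_{a \in \universeA} \frac{w_a}{n_a^2 \mu_a^2}\sum_{c \in \universeC(a)} \frac{n_c^2 \sigma_c^2}{s_c}$, and similarly for $\universeB$.

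The key step --- and the one place where the multi-group-by structure actually enters --- is to interchange the order of summation. Because $\{\universeC(a)\}_{a \in \universeA}$ partitions $\universeC$ (each fine stratum $c$ lies in exactly one coarse group $\project(c,A)$), the double sum over $a$ and $c \in \universeC(a)$ collapses into a single sum over $c \in \universeC$ with the weight of its unique parent $\project(c,A)$ attached, namely $\sum_{c \in \universeC}\frac{w_{\project(c,A)}}{n_{\project(c,A)}^2 \mu_{\project(c,A)}^2}\cdot\frac{n_c^2\sigma_c^2}{s_c}$. Doing the identical manipulation for the $\universeB$-part and adding the two, every stratum $c$ now contributes $\frac{n_c^2\sigma_c^2}{s_c}$ multiplied by the bracketed sum of two terms, which is precisely $\beta_c/s_c$. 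The problem thus reduces to minimizing $\sum_{c \in \universeC} \beta_c/s_c$ subject to $\sum_{c} s_c \le M$, and Lemma~\ref{lem:opt-helper} with $\alpha_c = \beta_c$ yields $s_d = M\sqrt{\beta_d}/\sum_{c}\sqrt{\beta_c}$, as claimed.

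The main obstacle is not any hard inequality but rather bookkeeping: justifying the change in summation order (i.e. that the sets $\universeC(a)$ are disjoint and cover $\universeC$, so that each $c$ is counted exactly once with parent $\project(c,A)$) and correctly identifying the $-n_c\sigma_c^2$ terms as irrelevant to the optimization. One subtlety worth double-checking is that $\universeC$ is genuinely a common refinement of both groupings, so that the same collection of fine strata admits both the $\project(\cdot,A)$ and $\project(\cdot,B)$ projections; this is exactly what lets the two partitions coexist and produces the two-term bracket defining $\beta_c$.
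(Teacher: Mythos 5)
Your proposal is correct and follows essentially the same route as the paper's own proof: square the objective, substitute the variance formula, drop the $s_c$-independent terms, regroup the double sum over coarse groups and their fine strata into a single sum $\sum_{c} \beta_c/s_c$, and invoke Lemma~\ref{lem:opt-helper}. The only difference is cosmetic --- you spell out explicitly that the collections $\{\universeC(a)\}$ and $\{\universeC(b)\}$ each partition $\universeC$, which the paper's proof uses implicitly when it rewrites $Y'(s)$ as a single sum over $c \in \universeC$.
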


\begin{proof}
Our objective function is the weighted $\ell_2$ norm of the coefficients of variance of all estimators $\{y_a | a \in \universeA\}$ and $\{y_b | b \in \universeB\}$, which we want to minimize over all possibilities of the sample sizes $s = \{s_c | c \in \universeC\}$, subject to $\sum_{c \in \universeC} s_c = M$. Equivalently, we minimize the square of the weighted $\ell_2$ norm:
\begin{align*}
Y(s) & = \sum_{a \in \universeA} w_a \cdot (\cv{y_a})^2 + \sum_{b \in \universeB} w_b \cdot(\cv{y_b})^2
\\
& = \sum_{a \in \universeA} \frac{w_a\cdot \variance{y_a}}{\mu_a^2} + \sum_{b \in \universeB}  \frac{w_b\cdot\variance{y_b}}{\mu_b^2}
\end{align*}
Using the exp. and variance of $y_a$, we can rewrite $Y(s)$: 
\begin{align*}
Y(s) & = \sum_{a \in \universeA} \frac{w_a\cdot\sum_{c \in \universeC(a)} \left[\frac{n_c^2 \sigma_c^2}{s_c}  - n_c\sigma_c^2 \right]}{n_a^2 \mu_a^2} \\
    & + \sum_{b \in \universeB}  \frac{w_b\cdot \sum_{c \in \universeC(b)} \left[ \frac{n_c^2 \sigma_c^2}{s_c} - n_c\sigma_c^2\right]}{n_b^2 \mu_b^2} 
\end{align*}
This is equivalent to minimizing:
\begin{align*}
Y'(s) & = \sum_{a \in \universeA} \sum_{c \in \universeC(a)} \frac{w_a  n_c^2 \sigma_c^2}{s_c n_a^2 \mu_a^2} + \sum_{b \in \universeB} \sum_{c \in \universeC(b)} \frac{w_b  n_c^2 \sigma_c^2}{s_c n_b^2 \mu_b^2}
\\
& = \sum_{c \in \universeC} \left[ \frac{w_{\project(c,A)} n_c^2 \sigma_c^2}{s_c n_{\project(c,A)}^2 \mu_{\project(c,A)}^2} + \frac{w_{\project(c,B)} n_c^2 \sigma_c^2}{s_c n_{\project(c,B)}^2 \mu_{\project(c,B)}^2} \right]
\end{align*}
We note that the problem turns to: minimize $Y'(s) = \sum_{c \in \universeC} \frac{\beta_c}{s_c}$ subject to $\sum_{c \in \universeC} s_c = M$. Using Lemma~\ref{lem:opt-helper}, we arrive that the optimal assignment of sample size is $s_c = M \cdot \frac{\sqrt{\beta_c}}{\sum_{d \in \universeC} \sqrt{\beta_d}}$.
\end{proof}

{\bf An example:} Consider a query $Q_1$ that groups by {\tt major} and aggregates by {\tt gpa}, and another query $Q_2$ that groups by {\tt year}, and aggregates by {\tt gpa}. Suppose each group in each query has the same weight $1$. The above algorithm stratifies according to the {\tt (major,year)} combination. For a stratum where {\tt major} equals $m$ and {\tt year} equals $y$, sample size $s_{m,y}$ is proportional to:
\[\beta_{m,y} = n_{m,y}^2 \sigma_{m,y}^2 \left[\frac{1}{n_{m,*}^2 \mu_{m,*}^2} + \frac{1}{n_{*,y}^2 \mu_{*,y}^2} \right]\]
Where $n_{m,y}, n_{m,*}, n_{*,y}$ are respectively the number of elements with {\tt major}~$=m$  and {\tt year}~$=y$, the number of elements with {\tt major}~$=m$, and the number of elements with {\tt year}~$=y$, respectively. Similarly for $\mu_{m,y}, \mu_{m,*}, \mu_{*,y}$.

{\bf Example 2:} Consider a query $R_1$ that groups by {\tt major, year} and aggregates by {\tt gpa}, and another query $R_2$ that groups by {\tt zipcode, year}, and aggregates by {\tt gpa}. Suppose all groups in both queries share the same weight~$1$. The above algorithm asks to stratify according to {\tt (major, zipcode, year)} combination. For a stratum where {\tt major} equals $m$, {\tt zipcode} equals $z$ and {\tt year} equals $y$, sample size $s_{m,z,y}$ is proportional to:
\[\beta_{m,z,y} =  n_{m,z,y}^2 \sigma_{m,z,y}^2 \left[\frac{1}{n_{m,*,y}^2 \mu_{m,*,y}^2} + \frac{1}{n_{*,z,y}^2 \mu_{*,z,y}^2} \right]\]
Where $n_{m,z,y}$, $n_{m,*,y}$, and $n_{*,z,y}$ are respectively the number of elements with {\tt major} equal to $m$ and {\tt zipcode} equal to $z$ and {\tt year} equal to $y$, the number of elements with {\tt major} equal to $m$ and {\tt year} equal to $y$, and the number of elements with {\tt zipcode} equal to $z$ and {\tt year} equal to $y$, respectively. Similarly for $\mu_{m,z,y}, \mu_{m,*,y}, \mu_{*,z,y}$.


\paragraph*{Generalizing to Multiple Group-Bys} 
\label{sec:1amg}
Suppose there were multiple group-by queries with attribute sets $A_1,A_2, \ldots, A_k$. The algorithm stratifies according to attribute set $C = \bigcup_{i=1}^k A_i$. For $i=1\ldots k$, let $\universeA_i$ denote the universe of all assignments to attributes in $A_i$ and $\universeC$ the universe of all possible assignments to attributes in $C$. Note that only those assignments that exist in the data need be considered. Extending the above analysis for the case of two group-bys, we get that the optimal assignment of samples as follows. For each $c \in \universeC$, stratum $c$ is assigned sample size proportional to the square root of:
\[\beta_c = n_c^2 \sigma_c^2 \sum_{i=1}^k \frac{w_{\project(c,A_i)}}{n_{\project(c,A_i)}^2 \mu_{\project(c,A_i)}^2}\]
\main{The proofs are similar to the case of two group-by queries, and details are available here~\cite{NSPXST-ARXIV2019}.}
\arxiv{
The proof of above result is similar to the case of Lemma~\ref{lem:samg}. We minimize the $\ell_2$ norm:
\begin{align*}
Y(s) & = \sum_{a_i \in \universeA_i} \frac{w_{a_i}\cdot\sum_{c \in \universeC(a_i)} \left[\frac{n_c^2 \sigma_c^2}{s_c}  - n_c\sigma_c^2 \right]}{n_{a_i}^2 \mu_{a_i}^2}
\end{align*}
This is equivalent to minimizing:
\begin{align*}
Y'(s) & = \sum_{a_i \in \universeA_i} \sum_{c \in \universeC(a_i)} \frac{w_{a_i}  n_c^2 \sigma_c^2}{s_c n_{a_i}^2 \mu_{a_i}^2}
\\
& = \sum_{c \in \universeC}  \sum_{i=1}^k  \frac{w_{\project(c,A_i)} n_c^2 \sigma_c^2}{s_c n_{\project(c,A_i)}^2 \mu_{\project(c,A_i)}^2}
= \sum_{c \in \universeC} \frac{\beta_c}{s_c}
\end{align*}
Using Lemma~\ref{lem:opt-helper}, we have the result proved.

}

\paragraph*{Cube-By Queries} An important special case of multiple group-by queries, often used in data warehousing, is the {\em cube-by} query. The cube-by query takes as input a set of attributes $A$ and computes group-by aggregations based on the entire set $A$ as well as every subset of $A$. For instance, if $A$ was the set {\tt major, year} and the aggregation column is $A$, then the cube-by query poses four queries, one grouped by {\tt major} and {\tt year}, one grouped by only {\tt major}, one grouped by only {\tt year}, and the other without a group-by (i.e. a full table query). Our algorithm for multiple group-by can easily handle the case of a cube-by query and produce an allocation that optimizes the $\ell_2$ norm of the CVs of all estimates. We present an experimental study of cube-by queries in Section~\ref{sec:experiment}.

\subsection{Multiple Aggregates, Multiple Group-Bys}
\label{sec:2a2g}
Suppose two queries, $Q_1$, $Q_2$, that aggregate on the different columns $d_1$ and $d_2$ and also use different sets of group-by attributes, $A$ and $B$ that may be overlapping. \eg $Q_1$ can aggregate {\tt gpa} grouped by  ({\tt major}, {\tt year}) and $Q_2$ can aggregate {\tt credits} grouped by ({\tt major}, {\tt zipcode}). 

We stratify the data according to attribute set $C = A \cup B$. As in Section~\ref{sec:sa2g}, let $\universeA, \universeB, \universeC$ denote the set of all values possible for attributes in $A$, $B$, and $C$ respectively. Also, for $a \in \universeA$, $b \in \universeB$, $c \in \universeC$, let $\universeC(a)$, $\universeC(b)$, $\project(c,A)$ and $\project(c,B)$ be defined as in Section~\ref{sec:sa2g}.

For each $c \in \universeC$, let $n_c$ denote the number of data elements in this stratum, $\sigma_{c,1}$ the variance of the $d_1$ column among all elements in this stratum, and $\sigma_{c,2}$ the variance of the $d_2$ column in this stratum. Let $s_c$ denote the number of samples assigned to this stratum, and $y_{c,1}$ and $y_{c,2}$ respectively denote the sample means of the columns $d_1$ and $d_2$ among all elements in stratum $c$ respectively.

For each $a \in \universeA$, we seek to estimate $\mu_{a,1}$, the mean of the $d_1$ column among all elements in this group. The estimate, which we denote by $y_{a,1}$ is computed as $\frac{\sum_{c \in \universeC(a)} n_c y_{c,1}}{\sum_{c \in \universeC(a)} n_c}$. Similarly for each $b \in \universeB$, we seek to estimate $\mu_{b,2}$ the mean of the $d_2$ column among all elements in this group. Let $y_{b,2}$ be this estimate. Our optimization metric is the weighted $\ell_2$ norm of the coefficients of variation of all estimates:
\[\mathcal{L}=\sum_{a \in \universeA} w_{a,1} \cdot (\cv{y_{a,1}})^2 + \sum_{b \in \universeB} w_{b,2} \cdot (\cv{y_{b,2}})^2\]

\begin{lemma}
\label{lem:mamg}
For two group-by and two aggregates, the optimal assignment of sample sizes that minimizes the weighted $\ell_2$ norm of the coefficients of variation of the estimates is: for $d \in \universeC$ the sample size is $s_d = M \cdot \frac{\sqrt{\beta_d}}{\sum_{c \in \universeC} \sqrt{\beta_c}}$, where 
\[\beta_c = n_c^2 \left[\frac{w_{\project(c,A),1}\sigma_{c,1}^2 }{n_{\project(c,A)}^2 \mu_{\project(c,A),1}^2} + \frac{w_{\project(c,B),2}\sigma_{c,2}^2}{n_{\project(c,B)}^2 \mu_{\project(c,B),2}^2} \right]\]
\end{lemma}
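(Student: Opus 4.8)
The plan is to follow exactly the route used in the proof of Lemma~\ref{lem:samg}, reducing the minimization of $\mathcal{L}$ to the canonical form solved by Lemma~\ref{lem:opt-helper}. The only structural novelty is that the two queries aggregate on \emph{different} columns $d_1$ and $d_2$, so the two summands of $\mathcal{L}$ carry the distinct per-stratum variances $\sigma_{c,1}^2$ and $\sigma_{c,2}^2$; the main thing to watch is to keep these (and the matching means and weights) separate rather than collapsing them into a single $\sigma_c^2$ as in the single-aggregate case.

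First I would note that the stratified estimators are unbiased, $\expec{y_{a,1}} = \mu_{a,1}$ and $\expec{y_{b,2}} = \mu_{b,2}$, so each squared coefficient of variation is $(\cv{y_{a,1}})^2 = \variance{y_{a,1}}/\mu_{a,1}^2$ and likewise for $b$. Since $y_{a,1}$ is the $n_c$-weighted average of the stratum sample means $y_{c,1}$ over $c \in \universeC(a)$, the same standard stratified-sampling variance formula used in Section~\ref{sec:sa2g} gives
\[
\variance{y_{a,1}} = \frac{1}{n_a^2} \sum_{c \in \universeC(a)} \left[ \frac{n_c^2 \sigma_{c,1}^2}{s_c} - n_c \sigma_{c,1}^2 \right],
\]
and the analogous expression holds for $\variance{y_{b,2}}$ with $\sigma_{c,2}^2$ replacing $\sigma_{c,1}^2$. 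Substituting these into $\mathcal{L}$ yields a sum of terms of two kinds: those proportional to $1/s_c$, and the $-n_c\sigma^2$ terms that are independent of the allocation $s$.

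Next I would discard the allocation-independent constants, leaving the objective
\[
\mathcal{L}' = \sum_{a \in \universeA} \frac{w_{a,1}}{n_a^2 \mu_{a,1}^2} \sum_{c \in \universeC(a)} \frac{n_c^2 \sigma_{c,1}^2}{s_c} + \sum_{b \in \universeB} \frac{w_{b,2}}{n_b^2 \mu_{b,2}^2} \sum_{c \in \universeC(b)} \frac{n_c^2 \sigma_{c,2}^2}{s_c}.
\]
The crucial step is to swap the order of summation using $A, B \subseteq C$: every stratum $c \in \universeC$ lies in exactly one group $\project(c,A) \in \universeA$ and exactly one group $\project(c,B) \in \universeB$. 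Reindexing by $c$ therefore collects both contributions into a single coefficient of $1/s_c$, giving $\mathcal{L}' = \sum_{c \in \universeC} \beta_c/s_c$ with $\beta_c$ exactly as stated in the lemma. Finally, applying Lemma~\ref{lem:opt-helper} with $\alpha_c = \beta_c$ under the budget constraint $\sum_{c \in \universeC} s_c = M$ produces the claimed allocation $s_d = M\sqrt{\beta_d}/\sum_{c \in \universeC}\sqrt{\beta_c}$.

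I expect no genuine obstacle here: the argument is a direct generalization of Lemma~\ref{lem:samg}, and the single point demanding care is purely notational, namely ensuring that the column index on $\sigma$, $\mu$, and $w$ tracks the correct query throughout (column~1 for the $A$-terms, column~2 for the $B$-terms) during both the substitution of the variance formula and the reindexing by $c$.
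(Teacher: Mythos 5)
Your proposal is correct and follows essentially the same route as the paper's own proof: substitute the stratified-sampling variance formula into the squared weighted objective, discard the allocation-independent terms, reindex the double sum over groups and strata as a single sum over strata $c \in \universeC$ (using the fact that each stratum projects to exactly one group in $\universeA$ and one in $\universeB$) to obtain $\sum_{c \in \universeC} \beta_c / s_c$, and then invoke Lemma~\ref{lem:opt-helper}. No substantive differences to report.
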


\main{The proof can be found here~\cite{NSPXST-ARXIV2019}.}

\arxiv{
\begin{proof}
	Our objective function is the weighted $\ell_2$ norm of the coefficients of variance of all estimators $\{y_{a,1} | a \in \universeA\}$ and $\{y_{b, 2} | b \in \universeB\}$, which we want to minimize over all possibilities of the vector of sample sizes $s = \{s_c | c \in \universeC\}$, subject to $\sum_{c \in \universeC} s_c = M$. Equivalently, we minimize the square of the weighted $\ell_2$ norm:
	\begin{align*}
	Y(s) & = \sum_{a \in \universeA} w_{a,1} \cdot (\cv{y_{a,1}})^2 + \sum_{b \in \universeB} w_{b,2} \cdot(\cv{y_{b,2}})^2 \\
	& = \sum_{a \in \universeA} \frac{w_{a,1} \cdot \variance{y_{a,1}}}{\mu_{a,1}^2} + \sum_{b \in \universeB}  \frac{w_{b,2}\cdot\variance{y_{b,2}}}{\mu_{b,2}^2}
	\end{align*}
	Using the expected value and variance of $y_a$, we can rewrite $Y(s)$: 
	
\[Y(s) = \sum\limits_{a \in {\rm{ \universeA}}} {\frac{{{w_{a,1}} \cdot \sum\limits_{c \in {\rm{ \universeC}}(a)} {\left[ {\frac{{n_c^2\sigma _{c,1}^2}}{{{s_c}}} - {n_c}\sigma _{c,1}^2} \right]} }}{{n_a^2\mu _{a,1}^2}}}  + \sum\limits_{b \in {\rm{ \universeB}}} {\frac{{{w_{b,2}} \cdot \sum\limits_{c \in {\rm{ \universeC}}(b)} {\left[ {\frac{{n_c^2\sigma _{c,2}^2}}{{{s_c}}} - {n_c}\sigma _{c,2}^2} \right]} }}{{n_b^2\mu _{b,2}^2}}} \]
	This is equivalent to minimizing:
	\begin{align*}
	Y'(s) & = \sum\limits_{a \in {\rm{ \universeA}}} {\sum\limits_{c \in {\rm{ \universeC}}(a)} {\frac{{{w_{a,1}}n_c^2\sigma _{c,1}^2}}{{{s_c}n_a^2\mu _{a,1}^2}}} }  + \sum\limits_{b \in {\rm{ \universeB}}} {\sum\limits_{c \in {\rm{ \universeC}}(b)} {\frac{{{w_b}n_c^2\sigma _{c,2}^2}}{{{s_c}n_b^2\mu _{b,2}^2}}} }
	\\
	& = \sum\limits_{c \in {\rm{ \universeC}}} {\left[ {\frac{{{w_{{\rm{\project}}(c,A)}}n_c^2\sigma _{c,1}^2}}{{{s_c}n_{{\rm{\project}}(c,A)}^2\mu _{{\rm{\project}}(c,A),1}^2}} + \frac{{{w_{{\rm{\project}}(c,B)}}n_c^2\sigma _{c,2}^2}}{{{s_c}n_{{\rm{\project}}(c,B)}^2\mu _{{\rm{\project}}(c,B),2}^2}}} \right]}
	\\
	& = \sum_{c \in \universeC} \frac{\beta_c}{s_c}
	\end{align*}
	Subject to $\sum_{c \in \universeC} s_c = M$. Using Lemma~\ref{lem:opt-helper}, we arrive that the optimal assignment of sample size is $s_c = M \cdot \frac{\sqrt{\beta_c}}{\sum_{d \in \universeC} \sqrt{\beta_d}}$.
\end{proof}
}

We can generalize this to the case of more than two aggregations, and/or more than two group-bys. Suppose there were $k$ group-by queries $Q_1, Q_2, \ldots, Q_k$, with attribute sets $A_1,A_2, \ldots, A_k$. Each query $Q_i$ has multiple aggregates on a set of columns denoted as $\mathcal{L}_i$. In this case, the algorithm stratifies according to attribute set $C = \bigcup_{i=1}^k A_i$. For $i=1\ldots k$, let $\universeA_i$ denote the universe of all assignments to attributes in $A_i$ and $\universeC$ the universe of all possible assignments to attributes in $C$. Note that only those assignments that exist in the data need be considered.  Extending the analysis from Section~\ref{sec:masg},~\ref{sec:1amg}, and~\ref{sec:2a2g}, we get that the optimal assignment of samples is as follows.  For each $c \in \universeC$, stratum $c$ is assigned sample size proportional to the square root of 
$$
\beta_c = n_c^2 \sum_{i=1}^k \left(\frac{1}{n_{\project(c,A_i)}^2 }\sum_{\ell\in \mathcal{L}_i}\frac{w_{\project(c,A_i),\ell}\cdot \sigma_{c,\ell}^2 }{\mu_{\project(c,A_i),\ell}^2}\right).
$$


\subsection{Using A Query Workload}
\label{sec:workload}
How can one use (partial) knowledge of a query workload to improve
sampling? A query workload is a probability distribution of expected
queries, and can be either collected from historical logs or created
by users based on their experience. In the presence of a query workload, 
we show how to construct a sample that is optimized for this workload.
We focus on the case of multiple aggregations, multiple group-by
(MAMG), as others such as SASG and MASG are special
cases. Our approach is to use the query workload 
to deduce the weights that we use in the weighted optimization for group-by queries.

\begin{table}[t]
\caption{An example \texttt{Student} table}
\label{tab:student}
\centering
\begin{tabular}{|c|c|c|l|l|l|l|}
\hline
{\bf id} & {\bf age} & {\bf GPA} & {\bf SAT} & {\bf major} & {\bf college}\\
\hline  
1 & 25 & 3.4 & 1250 & CS & Science\\
\hline
2 & 22 & 3.1 & 1280 & CS & Science\\
\hline
3 & 24 & 3.8 & 1230 & Math & Science\\
\hline
4 & 28 & 3.6 & 1270 & Math & Science\\
\hline
5 & 21 & 3.5 & 1210 & EE & Engineering\\
\hline
6 & 23 & 3.2 & 1260 & EE & Engineering\\
\hline
7 & 27 & 3.7 & 1220 & ME &Engineering\\
\hline
8 & 26 & 3.3 & 1230 & ME &Engineering\\
\hline
\end{tabular}
\end{table}

\begin{table}[t]
\caption{An example query workload on the \texttt{Student} table}
\label{tab:workload}
\centering
\begin{tabular}{|l|c|}
\hline
{\bf \hspace{20mm}Queries} & {\bf repeats}\\
\hline
A: \texttt{\textbf{SELECT} AVG(age), AVG(gpa)} & \\
\ \ \ \ \texttt{\textbf{FROM} Student \textbf{GROUP BY} major} &20\\
\hline
B:  \texttt{\textbf{SELECT} AVG(age), AVG(sat)} & \\
\ \ \ \ \texttt{\textbf{FROM} Student \textbf{GROUP BY} college} & 10\\
\hline
C: \texttt{\textbf{SELECT} AVG(gpa) \textbf{FROM} Student} & \\
\ \ \ \ \texttt{\textbf{GROUP BY} major \textbf{WHERE} college=Science} & 15\\
\hline
\end{tabular}
\end{table}

\begin{table}[t]
\caption{Aggregation groups and their frequencies
    produced from the example workload (Table~\ref{tab:workload})}
\label{tab:aggregation-groups}
\centering
{\small
\begin{tabular}{|c|c|}
\hline 
 \textbf{Aggregation groups} & \textbf{Frequency} \\
\hline
  \begin{tabular}[c]{@{}c@{}c@{}}
 	{(age, major=CS), (age, major=Math), (age, major=EE)} \\
    {(age, major=ME), (GPA, major=EE ), (GPA, major=ME)} \\
  \end{tabular} & 25 \\
\hline
(GPA, major=CS), (GPA, major=Math) & $35$ \\
\hline
  \begin{tabular}[c]{@{}c@{}}
    {(age, college=Science), (age, college=Engineering)} \\
    {(SAT, college=Science), (SAT, college=Engineering)} \\
  \end{tabular}  & $10$ \\
  \hline
\end{tabular}
}
\end{table}


{\bf An Example:}
Consider an example \texttt{Student} table and its query workload
shown in Tables~\ref{tab:student} and~\ref{tab:workload}.
The workload has $45$
group-by queries, of which three are distinct, named A, B, and C.
Each group-by query stratifies its aggregation
columns into a collection of mutually disjoint \emph{aggregation groups}.
Each aggregation group is identified by a tuple of $(a,b)$, where $a$
is the aggregation column name and $b$ is one value assignment to the
group-by attributes. For example,  Query A stratifies the \texttt{age}
column into four aggregation groups: \emph{(age,major=CS)}, \emph{(age,
  major=Math)}, \emph{(age,major=EE)}, and \emph{(age,major=ME)}. Each
aggregation group is a subset of elements in the aggregation column, \eg aggregation group \emph{(age,major=CS)} is the set
$\{25,22\}$. 
One aggregation group may appear more than once,
because one query may occur multiple times
and different queries may share aggregation
group(s). Our preprocessing is to deduce all
the aggregation groups and 
their frequencies from the workload. Table~\ref{tab:aggregation-groups} shows the result
of the example
workload. We then use each aggregation
group's frequency as its weight in the
optimization framework for \cvopt sampling.\looseness=-1

\remove{
We note that although one \cvopt sample can be produced from a
workload that includes queries with certain predicates,
the sample itself can support queries with other 
predicates submitted at query time or even without a predicate. For example, the
\cvopt sample produced from the above example \texttt{Student} table
and its example query workload  also supports
queries that do not appear in the query workload, for example: 
\begin{lstlisting}
SELECT AVG(gpa) FROM Student
GROUP BY major;

SELECT AVG(gpa) FROM Student GROUP BY major
WHERE college = 'Engineering';
\end{lstlisting}
}

\remove{
We also want to note that users do have their own discretion to decide
the query workload preprocessing mechanism based on the actual needs
in their application. For example, they can use the square root of the
frequency of each aggregation group as its weight, if it is more
appropriate for their needs. Or, the weight of any particular
aggregation group can even be manually tuned. However, the change of
the internal mechanism for workload preprocessing is transparent to
and thus does not affect our \cvopt sampling framework, making our
method to be flexible enough to accommodate an add-on query workload
preprocessor.
}

\remove{
\todo{
Add an example showing a conversion from a query workload to weights.
The example needs to have multiple queries, possibly sharing some strata and group-by attributes, and frequencies for each query.
Show how to convert from this input workload to weights.
}
}

\remove{
Each group-by query in the given query workload stratifies each
aggregated column into a collection of disjoint subcolumns. The number
of subcolumns is equal to the number of groups. We call
each subcolumn as an \emph{entity}. For example,
suppose the university has 10 majors, then 
a group-by query 
\begin{lstlisting}
	SELECT 	major,
  		AVG(gpa), 
  		AVG(age) 
	FROM student 
	GROUP BY major
\end{lstlisting}
stratifies both the \texttt{gpa}  the \texttt{age} columns into 10
subcolumns, 
 and therefore there are a total of 20
entities of interest induced by this group-by query.
Let $\mathcal{W}$ denote the collection of all entities induced 
by a given query workload. Note that $\mathcal{W}$ is a multiset
because a certain entity may appear in multiple queries in the
workload. 
Our goal is to let the designer to decide the relative importance
(weight) of each of these entities. 
For example, if the designer wants to  offer an equal level of accuracy guarantee to the
aggregation of each of these
entities and want to minimize the $\ell_2$ norm of the CVs of their
estimation: 

\begin{eqnarray}
\sum_{e\in \mathcal{W}} (\cv{\widehat{agg}(e)})^2
= \sum_{e\in \mathcal{D}} f_e \cdot (\cv{\widehat{agg}(e)})^2
\label{eqn:workload}
\end{eqnarray}
where $\widehat{agg}(e)$ denote the estimate of the aggregate function
value of the entity $e$ obtained from
using the sample, $\mathcal{D}$ denote the set of distinct entities
from $\mathcal{W}$, $f_e$ represents the frequency of an entity $e$.
By comparing the above quantity~\ref{eqn:workload} that we want to
minimize  with the definition of the weighted group (Equation~\ref{eq:weight}), 
we can set the weight value $w_e$ of each distinct $e$ to be
proportional to $f_e$, e.g., $w_e = \frac{f_e}{\sum_{\alpha\in\mathcal{E}}f_\alpha}$, our optimized
random sampling framework for group-by queries can take as input a
given query workload to produce bias toward better accuracy guarantee
for entities that are queried more often.

We also want to note that users do have their own discretion to decide the query workload preprocessing mechanism based on the actual needs in their application. For example, they can set each $w_e$ to be proportional to $\sqrt{f_e}$, if it is more appropriate for their systems. Or, any particular $w_e$ can be manually tuned up or down based on the needs from applications.  However, the change of the internal mechanism for workload preprocessing is transparent to and thus does not affect the sampling framework that we presented in the previous sections, making our sampling strategy to be flexible enough to have an add-on workload preprocessor.
}

	\section{\cvoptinf and Extensions}

\label{sec:linf}
We now consider {\color{blue} minimizing} for the maximum of the CVs, or the $\ell_\infty$ norm of all $\cv{y_i}$ for different groups $i$. 
That is:
$$
\ell_\infty({\mathbb C}) =  \max_{i=1}^r \cv{y_i} = \max_{i=1}^r
\frac{\sigma_i}{\mu_i} \sqrt{ \frac{n_i-s_i}{n_i s_i}}
$$
subject to the constraint $\sum_{i=1}^{r} s_i \leq M$.
One obvious benefit of using $\ell_\infty({\mathbb C})$ as the objective
function is that the relative errors of different groups are expected to be the same.

The above problem has integrality constraints and is hard to optimize exactly.
In the rest of this section, we present an
efficient algorithm that relaxes the integrality constraints and 
assumes that $s_i$s can have real values. Note that we assume every $\sigma_i > 0$, 
any group $i$ where $\sigma_i=0$ can be treated as a special case, since all its values are equal,
and there is no need to maintain a sample of that group. 

\textbf{An efficient algorithm:}
Consider a collection of $r$ functions 
$$f_i(x) = \frac{\sigma_i}{\mu_i} \sqrt{ \frac{n_i-x}{n_i x}}, \ \ i=1,2,\ldots,r$$
where $x\in [0,M]$ is a real number. Our goal is find an assignment of
$x=x_i$ for each $f_i$, that minimizes 
$$
\max_{i=1}^r f_i(x_i)= \max_{i=1}^r
\frac{\sigma_i}{\mu_i} \sqrt{ \frac{n_i-x_i}{n_i x_i}}
$$
and $\sum_{i=1}^{r}x_i \leq M$.
Let $x_1^{\ast}, x_2^{\ast}, \ldots, x_r^{\ast}$ denote
such assignments that solve this continous optimization problem. 

\begin{lemma}
\label{lem:l-inf-cv}
$f_1(x_1^{\ast}) = f_2(x_2^{\ast}) = \ldots = f_r(x_r^{\ast})$
\end{lemma}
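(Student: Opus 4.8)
The plan is to prove Lemma~\ref{lem:l-inf-cv} by a budget-exchange (equalization) argument: I will show that if the optimal values $f_i(x_i^\ast)$ are not all equal, then one can strictly decrease the maximum while keeping the total budget fixed, contradicting the optimality of $x_1^\ast,\ldots,x_r^\ast$.

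First I would record the elementary properties of each $f_i$. On its feasible range, $f_i(x) = \frac{\sigma_i}{\mu_i}\sqrt{\frac{n_i-x}{n_i x}}$ is continuous and strictly decreasing in $x$, with $f_i(x)\to +\infty$ as $x\to 0^+$ and $f_i(n_i)=0$ (using $\sigma_i>0$ and $\mu_i\neq 0$). This has two consequences. Any optimal assignment must have every $x_i^\ast>0$, since otherwise the objective is $+\infty$. And, writing $V=\max_{i} f_i(x_i^\ast)$, if $V=0$ the claim holds trivially (all values equal $0$), so I may assume $V>0$; then $f_i(n_i)=0<V$ forces $x_i^\ast<n_i$ for every maximizing index, placing each relevant $x_i^\ast$ strictly inside the interval where $f_i$ is continuous and strictly decreasing.

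\textbf{The core step.} Suppose for contradiction that the $f_i(x_i^\ast)$ are not all equal, and let $I=\{\,i : f_i(x_i^\ast)=V\,\}$. By assumption $I$ is a proper subset, so I can pick an index $j\notin I$, for which $f_j(x_j^\ast)<V$. Now perturb the assignment: decrease $x_j^\ast$ by a small $\delta>0$ and redistribute the freed budget in positive pieces among the groups of $I$ (say $\delta/|I|$ to each). This preserves $\sum_i x_i = M$, hence feasibility, and for $\delta$ small keeps $x_j^\ast-\delta>0$ and $x_i^\ast+\delta/|I|<n_i$ for $i\in I$. By strict monotonicity each $f_i$ with $i\in I$ now drops strictly below $V$; by continuity of $f_j$ together with $f_j(x_j^\ast)<V$, for $\delta$ small enough $f_j(x_j^\ast-\delta)<V$; and every remaining group is unchanged and already below $V$. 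Thus the new maximum is strictly less than $V$, contradicting optimality. Therefore all $f_i(x_i^\ast)$ coincide.

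The only delicate point — and the reason the argument needs the whole set $I$ rather than a single maximizer — is that the budget freed from $j$ must be distributed to \emph{all} maximal groups simultaneously; moving it to just one group of $I$ would leave the other maximizers pinned at value $V$ and would fail to reduce the max. Continuity of $f_j$ is what guarantees that a sufficiently small $\delta$ does not push $f_j$ up to or past $V$. I expect the verification of these smallness conditions (choosing $\delta$ so that $f_j(x_j^\ast-\delta)<V$ and all feasibility bounds hold) to be the main, though routine, obstacle; everything else is immediate from monotonicity and continuity.
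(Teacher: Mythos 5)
Your proof is correct and takes essentially the same approach as the paper: a budget-exchange contradiction in which, if the $f_i(x_i^\ast)$ are not all equal, budget is shifted from a non-maximal group to all maximal groups, strictly decreasing the maximum while preserving the budget constraint. Your write-up is in fact slightly more careful than the paper's version, which swaps the words ``reduce'' and ``increase'' (since each $f_i$ is strictly decreasing, the maximizers' $x_i^\ast$ must be \emph{increased} and the donor's decreased) and omits the positivity, $V=0$, and $x_i^\ast<n_i$ boundary checks that you verify explicitly.
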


\main{The lemma's proof by contradiction can be found here~\cite{NSPXST-ARXIV2019}.}
\arxiv{
\begin{proof}
We use proof by contradiction. Suppose the claim in the lemma is not
true, then without losing generality, say $f_1(x_1^{\ast}),
f_k(x_k^{\ast}), \ldots, f_k(x_k^{\ast})$, for some $k< r$, are all the 
largest and $f_r(x_r^{\ast})$ is the smallest. 
Then, because each $f_i$ is a strictly decreasing function, 
we can always reduce the value of each $x_i^{\ast}$, $i\leq k$, for
some amount $c$ and increase the value of  $x_r^{\ast}$
by amount $kc$, such that (1) every $f_i$ is decreased, $i\leq k$.
(2) $f_r$ is increased, and  (3) $\max_{i=1}^r f_i$ is
decreased. This is a contradiction, since we already know 
$\max_{i=1}^r f_i(x_i^{\ast})$ was minimized. 
\end{proof}
}
%
%
%
Following Lemma~\ref{lem:l-inf-cv}, we can have 
$$
\frac{\sigma_1}{\mu_1} \sqrt{ \frac{n_1-x_1^\ast}{n_1 x_1^\ast}} = 
\frac{\sigma_2}{\mu_2} \sqrt{ \frac{n_2-x_2^\ast}{n_2 x_2^\ast}} = 
\ldots
\frac{\sigma_r}{\mu_r} \sqrt{ \frac{n_r-x_r^\ast}{n_r x_r^\ast}}
$$
\begin{eqnarray}
\label{eqn:continous}
\Longrightarrow\frac{d_1}{x^\ast_1/\bar{x}^\ast_1}
=\frac{d_2}{x^\ast_2/\bar{x}^\ast_2}
=\ldots
=\frac{d_r}{x^\ast_r/\bar{x}^\ast_r}
\end{eqnarray}
where $d_i = \frac{(\sigma_i/\mu_i)^2}{n_i}$ and $\bar{x}^\ast_i = n_i - x^\ast_i$. Let $D = \sum_{i=1}^r d_i$. 
{\color{blue}By} Equations~\ref{eqn:continous}, each $x^\ast_i/\bar{x}^\ast_i$ is 
proportional
to $d_i$, i.e, 
${x^\ast_i}/{\bar{x}^\ast_i} = q^\ast\cdot {d_i}/{D}$ for some  real number
constant
$q^\ast \in [0,n]$. Namely,
$$ x^\ast_i = \frac{q^\ast \cdot d_i/D}{1+q^\ast\cdot d_i/D}n_i. $$
Our approach to minimize $\ell_\infty({\mathbb C})$ is to perform
a binary search for the largest integer $q\in [0,n]$ that approximates
$q^\ast$, {\color{blue} in such}
$$
\sum_{i=1}^r x_i 
=
\sum_{i=1}^r \frac{q\cdot d_i/D}{1+q\cdot d_i/D}n_i
\leq M.
$$ 
If the binary search returns $q=0$, we set $q=1$. 
We then assign each $s_i = \left\lceil\frac{x_i}{\sum_{j=1}^r x_j}M\right\rceil$.
Clearly, the total time cost for finding the set of $s_i$ values 
is $O(r \log n)$: There are a total of $O(\log n)$ binary
search steps where each step takes $O(r)$ time.

\textbf{Extension to Other Aggregates.}
\label{sec:ext}
Thus far, our discussion of the \cvopt framework has focused on group-by
queries using the AVG aggregates (COUNT and SUM are very similar). 
\cvopt can be extended to other aggregates as well. 
To use the framework for an aggregate, we need to: (1)~have the per-group CV of the aggregate of interest well defined,
and (2)~ensure that it is possible to compute the CV of a stratum using statistics stored for strata in finer stratification of this stratum.
Hence, the method can potentially be extended to aggregates such as per-group median and variance.

\section{Experimental Evaluation}
\label{sec:experiment}

{\color{red} We evaluate \cvopt for approximate query
  processing using Hive~\cite{hive} as the underlying data
  warehouse. The first phase is an offline sample computation that
  performs two passes of the data. The first pass computes some
  statistics for each group, and the second pass uses these
  statistics as input for \cvopt to decide the sample sizes for different
  groups and to draw the actual sample. 
  
  
%
%
  In the second phase, the sample obtained using \cvopt is used to (approximately) answer queries.
  The sample from \cvopt is representative and can answer queries that involve selection predicates provided at query time, as
  well as new combinations of groupings, without a need for recomputing the sample at query time.
  Overall, we can expect the overhead of the offline sampling phase to be small when compared with
  the time saved through sample-based query processing.
}

\remove{
  incorporate selection predicates provided at query time, as
  well as new combinations of groupings, it can answer
  queries with good accuracy and the overhead by the offline
  phase thus becomes negligible compared to the time we save 
  from sample-based query processing.\looseness=-1
%
}

We collected two real-world datasets, \openaq and \bikes.
\openaq~\cite{OpenAQ} is a collection of the air quality measurements of different substances, such as carbon monoxide, sulphur dioxide, \etc The data consists of about 200 million records, collected daily from ten thousand locations in 67 countries from 2015 to 2018. 
\bikes is data logs from Divvy~\cite{bikeshare}, a Chicago bike share company. Customers can pick up a bike from one Divvy station kiosk and return it to any station at their convenience. The dataset contains information about these bike rides and also has some user information, such as gender or birthday. We analyzed all subscribers' data from 2016 to 2018, for a total of approximately 11.5 million records. The datasets are stored in the database as 2 tables {\tt OpenAQ} and {\tt Bike}. Throughout this section, we introduce queries to those 2 tables, annotated with "AQ" and "B" prefixes, respectively.


Approximate answers from sample table are compared to the ground truth, derived using an exact computation from the full data. Let $x$ and $\bar{x}$ be the ground-truth and approximate answer, respectively. We use the relative error $\vert \bar{x}-x \vert /x$ as the error metric for each group. 

We implemented algorithms \uniform, \sps, \cs and \rl to compare with \cvopt (see Section~\ref{sec:related} for a brief overview).
\uniform is the unbiased sampler, which samples records uniformly without replacement from the base table.
\rl is the algorithm due to Rosch and Lehner~\cite{RL-EDBT2009}.
\cs is congressional sampling algorithm due to~\cite{AGP00}.
\sps is from~\cite{sampleandseek}, after applying appropriate normalization to get an unbiased answer. 
\cvopt is the implementation of our $\ell_2$ optimal sampler. We also report results from \cvoptinf, the $\ell_{\infty}$-optimal sampler. Unless otherwise specified, each method draws a $1\%$ sample. 
{\color{red}
Each reported result is the average of 5 identical and independent repetitions of an experiment.
}

\subsection{Accuracy of Approximate Query Processing}
\label{exp:accuracy}
\begin{query}[b]
\renewcommand{\ttdefault}{pcr}
\begin{lstlisting}
WITH bc18 AS (
  SELECT country, AVG(value) AS avg_value, 
         COUNT_IF(value > 0.04) AS high_cnt
  FROM   openaq	WHERE  parameter = 'bc' 
    AND  YEAR(local_time) = 2018 
  GROUP BY country ), 
bc17 AS (
  SELECT country, AVG(value) AS avg_value, 
         COUNT_IF(value > 0.04) AS high_cnt
  FROM   openaq	WHERE  parameter = 'bc' 
    AND  YEAR(local_time) = 2017
  GROUP BY country )
SELECT country,
  bc18.avg_value - bc17.avg_value AS avg_incre,
  bc18.high_cnt - bc17.high_cnt AS cnt_incre
FROM bc18 JOIN bc17
ON bc18.country = bc17.country
\end{lstlisting}
\caption{Changing of \textit{bc} overtime for each country.}
\label{query:openaq_masg_complex}
\end{query}
\begin{query}[b]
\renewcommand{\ttdefault}{pcr}
\begin{lstlisting}
SELECT country, parameter, unit,
       SUM(value) agg1, COUNT(*) agg2
FROM OpenAQ
GROUP BY country, parameter, unit
\end{lstlisting}
\caption{MASG query to \openaq table.}
\label{query:openaq_masg}
\end{query}

\begin{query2}[b]
\renewcommand{\ttdefault}{pcr}
\begin{lstlisting}
SELECT from_station_id, 
       AVG(age) agg1, AVG(trip_duration) agg2
FROM Bikes	WHERE age > 0
GROUP BY from_station_id
\end{lstlisting}
\caption{MASG query to \bikes table.}
\label{query:trips_masg}
\end{query2}

\begin{figure}[t]
\centering
\includegraphics[width=\graphwidth]{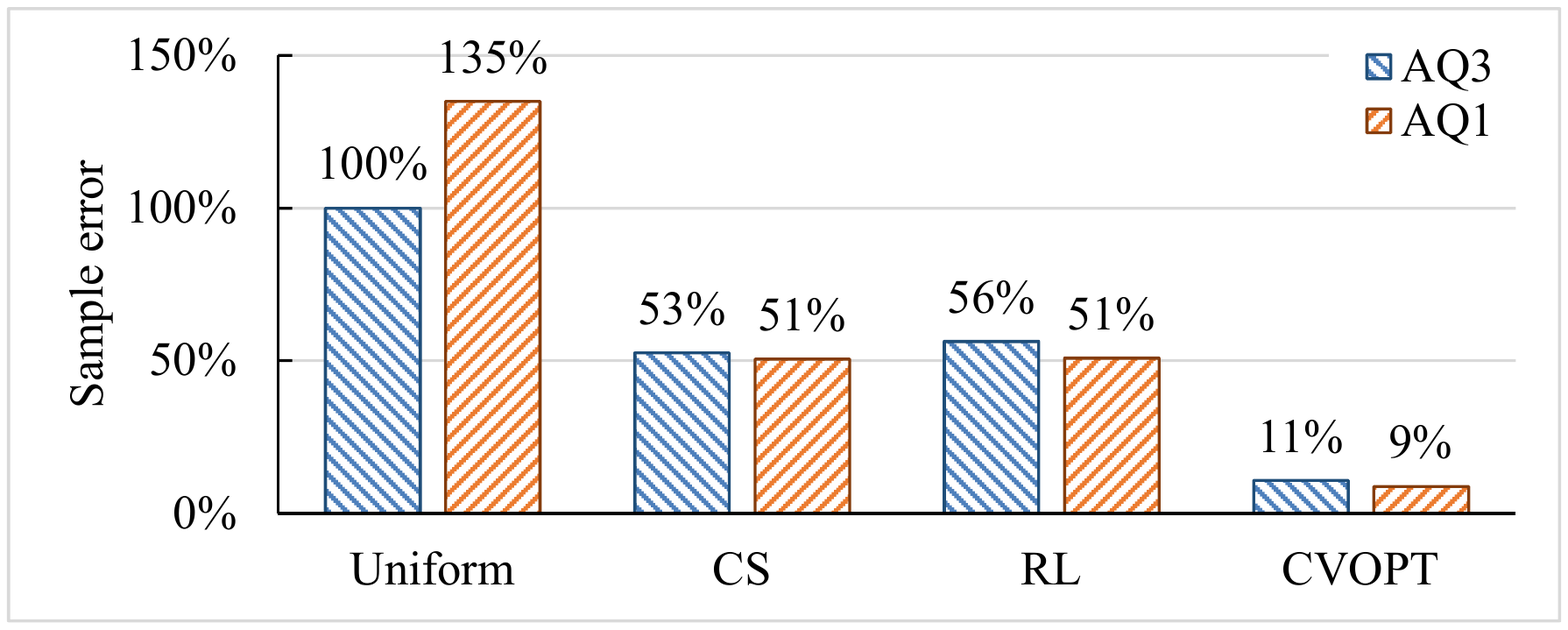}
\caption{Maximum error for \masg query AQ\ref{query:openaq_masg_complex} and \sasg query AQ\ref{query:openaq_sasg} using a $1\%$ sample.}
\label{fig:openaq_max}
\end{figure}

The quality of a sample is measured by the accuracy of the approximate answers using the sample. 
{\color{red}
	We introduce {\masg} queries AQ\ref{query:openaq_masg_complex}, AQ\ref{query:openaq_masg} and B\ref{query:trips_masg}.}
AQ\ref{query:openaq_masg_complex} is a relatively complex, realistic query computing the changes of both the average and the number of days with high level of black carbon \textit{(bc)}, for each country between 2017 and 2018. The query contains
different aggregates, multiple table scans, and a join operator. 
\remove{Note that our sample is materialized, it serves multiple table scans, as
well as different queries.}
AQ\ref{query:openaq_masg} and B\ref{query:trips_masg} are simpler examples of \masg query, which has multiple aggregate functions sharing the group-by.

Figure~\ref{fig:openaq_max} shows the maximum
errors of the approximate answers of
query AQ\ref{query:openaq_masg_complex} using a $1\%$ sample.
We report the maximum error across all answers. \cvopt shows a
significant improvement over other methods. It has a maximum error
of $8.8\%$ while \cs and \rl have error of as much as
$50\%$. With the same space budget, the error of \uniform can be as
large as $135\%$, as some groups are poorly represented.
Similar improvements are observed with other \masg queries. 
For AQ\ref{query:openaq_masg}, the maximum errors of \cs, \rl and \cvopt are $10.1\%$, $29.5\%$ and $5.9\%$ respectively. 
For B\ref{query:trips_masg} the maximum errors of \cs, \rl and \cvopt are $11.7\%$, $8.8\%$ and $7.7\%$, respectively.\looseness=-1


\begin{query}[b]
\renewcommand{\ttdefault}{pcr}
\begin{lstlisting}
SELECT country, parameter, unit, AVG(value)
FROM OpenAQ
WHERE HOUR(local_time) BETWEEN 0 AND 24
GROUP BY country, parameter, unit
\end{lstlisting}
\caption{Average value.}
\label{query:openaq_sasg}
\end{query}

\begin{query2}[b]
\renewcommand{\ttdefault}{pcr}
\begin{lstlisting}
SELECT from_station_id, AVG(trip_duration)
FROM Bikes	WHERE trip_duration > 0
GROUP BY from_station_id
\end{lstlisting}
\caption{Average trip duration from each station.}
\label{query:trips_sasg}
\end{query2}

\begin{query}[b]
\renewcommand{\ttdefault}{pcr}
\begin{lstlisting}
SELECT AVG(value),  
       country, 
       CONCAT(month, '_', year) 
FROM (SELECT value,
             MONTH(local_time) AS month, 
             YEAR(local_time) AS year, 
             country 
      FROM  OpenAQ	WHERE parameter = 'co' ) 
GROUP BY country, month, year
\end{lstlisting}
\caption{Average carbon monoxide.}
\label{query:openaq_sasg_complex}
\end{query}


{\color{red} We present queries AQ\ref{query:openaq_sasg},
  B\ref{query:trips_sasg} and AQ\ref{query:openaq_sasg_complex} as
  case-studies for \sasg query.} AQ\ref{query:openaq_sasg_complex}
is a realistic example, while AQ\ref{query:openaq_sasg} and B\ref{query:trips_sasg} are simpler
examples.  Figure~\ref{fig:openaq_max} shows
the maximum errors for AQ\ref{query:openaq_sasg} using a $1\%$ sample.
For both \sasg and \masg queries, \cvopt yields the lowest
error.  While \cvopt has 11\% sample error, \cs
and \rl have large errors of more than 50\%.  
{\color{red} For \sasg query, \eg
  AQ\ref{query:openaq_sasg_complex}, \cvopt and \rl share similar
  objective functions. However, \rl assumes that the size of a group is always large,
   and in allocation sample sizes, does not take the group size 
   into account (it only uses the CV of elements in the group).
   However real data, including the \openaq dataset, may contain small groups, where 
   \rl may allocate a sample size greater than the group size. \cvopt does not make such an
   assumption, and hence often leads to better quality samples than \rl, even for the case of \sasg.}
   
   \remove{
  size for each group is always no larger than the group size, which,
  however, may not be true in realworld data sets (e.g., the \openaq
  data). Thus, they waste the
  overly allocated
  memory space. Make no such assumption, \cvopt 
  always reuses any overly allocated memory space for groups that have
  enough records to be drawn into the sample.}
\uniform has largest error of 100\%,
as some groups are absent in \uniform sample. Similar results are seen
in other \sasg queries, where \cs, \rl and \cvopt have the maximum
errors 39\%, 
22\% and 
21\% 
for B\ref{query:trips_sasg}; and 14\%, 
34\% and 
8\% 
for AQ\ref{query:openaq_sasg_complex}, respectively.

\begin{table}[t] 
\setlength{\tabcolsep}{2.6pt}
\begin{tabular}{|c|cccc|cccc|}
\hline
         & \multicolumn{4}{c|}{\openaq} & \multicolumn{4}{c|}{\bikes} \\ \cline{2-9} 
         & \sasg &\masg   & \samg	& \mamg		& \sasg   & \masg  & \samg	& \mamg	\\ \hline
\uniform & 21.2 & 19.0  & 12.3 & 10.9	& 14.7  & 9.0  & 24.0 & 20.5 \\
\sps	 & 38.4	& 20.9  & 34.1 & 33.2	& 10.9  & 15.6 & 15.3 & 15.2 \\ 		
\cs      & 2.1 	& 1.1   & 3.2  & 2.3	& 4.8   & 2.6  & 6.9  & 5.2 \\
\rl      & 3.0  & 1.8   & 4.5  & 3.6	& 4.3   & 2.8  & 7.6  & 5.8 \\
\cvopt   & 1.6  & 0.8   & 2.4  & 2.2	& 4.0   & 2.3  & 6.3  & 4.8 \\ \hline
\end{tabular}
\centering
\caption{Percentage average error for different queries, \openaq and \bikes datasets, with $1\%$ and $5\%$ samples, respectively.}
\label{tbl:avg}
\end{table}

Table~\ref{tbl:avg} summarizes the average errors of different
queries. Generally, \cvopt shows the best average error among different methods. 
{\color{red} In some cases, \cvopt has minor improvements for average error, but for maximum error, \cvopt significantly outperforms others.}
That is because \cvopt gives a good representation for \textit{all groups} while others do not guarantee to cover all groups.  The order of other methods changes for different queries, as each method has an advantageous query type, while \cvopt is fairly stable across multiple types of query.  Note that Queries AQ\ref{query:openaq_sasg}, B\ref{query:trips_masg}, and B\ref{query:trips_sasg} have selection predicates, which  are applied
after the sampling is performed. 
{\color{red} We study how the sample can be reused for predicates with different 
selectivities and for different group-by attributes in Section~\ref{exp:sensitivity}.}
{\color{red} 
Since the error of \sps can be very large (maximum error as high as 173\%), we 
exclude it from comparisons in the rest of this section.}

\remove{
\begin{figure}[t]
    \centering
	\includegraphics[width=\graphwidth]{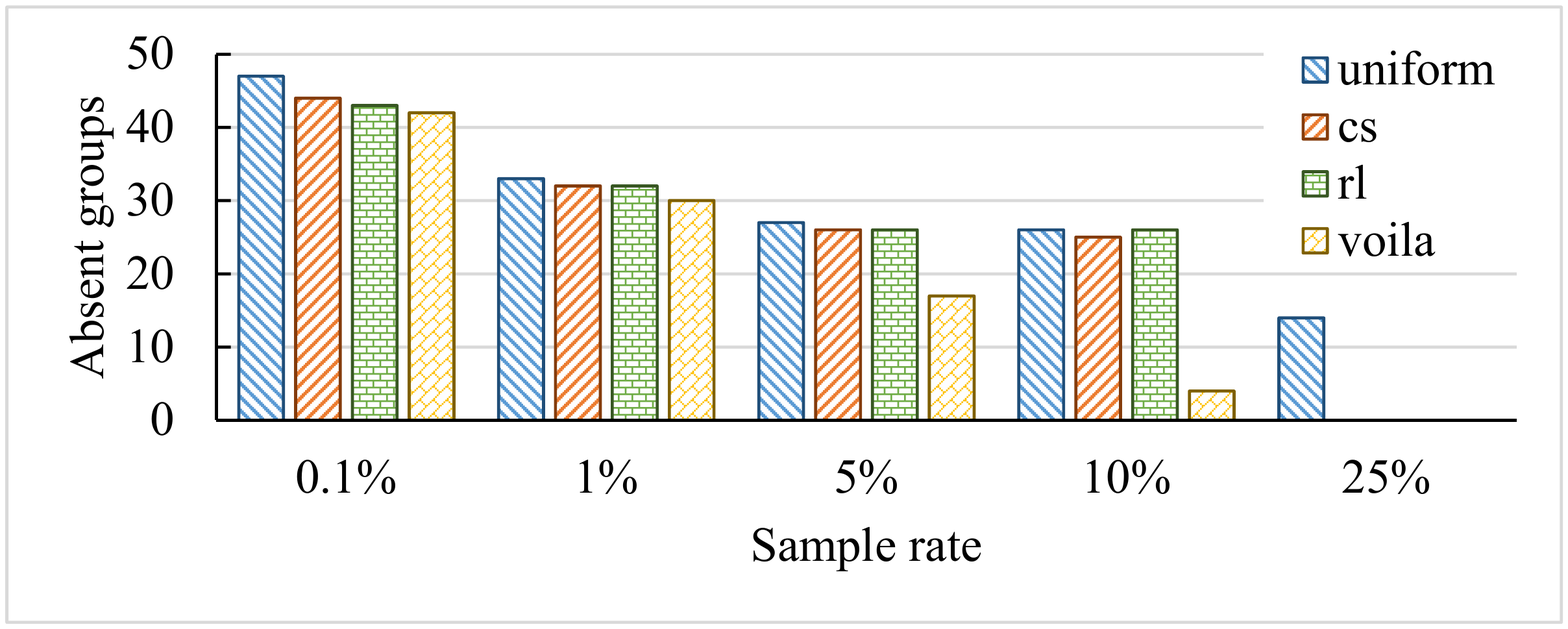}
	\caption{Number of absent groups in samples of \openaq using different algorithms, various by sample rate.}
	\label{fig:openaq_missing_group}
\end{figure}

Representing all groups is crucial in sampling for group-by query. 
Figure~\ref{fig:openaq_missing_group} shows number of missing groups in the sample of \openaq table, where we grouped the data by 3 attributes {\tt country, parameter} and {\tt unit}. Generally, when the sample rate increases, as the sample become large, the number of missing groups decreases. For each setting, \cvopt has fewer absent groups. It contributes to the quality of the \cvopt sample. With $5\%$ sample, \cvopt only missed few outliers. The $10$ \cs and \rl samples are lack of about 25 out of 210 groups. Having more than $10$ sample, stratified sampling, \cs, \rl and \cvopt have all the groups in the sample, while \uniform misses 15 groups.
}



\subsection{Weighted aggregates}
\label{exp:weighted}
\begin{figure}[t]
	\centering
	\begin{subfigure}{\columnwidth}
        \centering
		\includegraphics[width=\graphwidth]{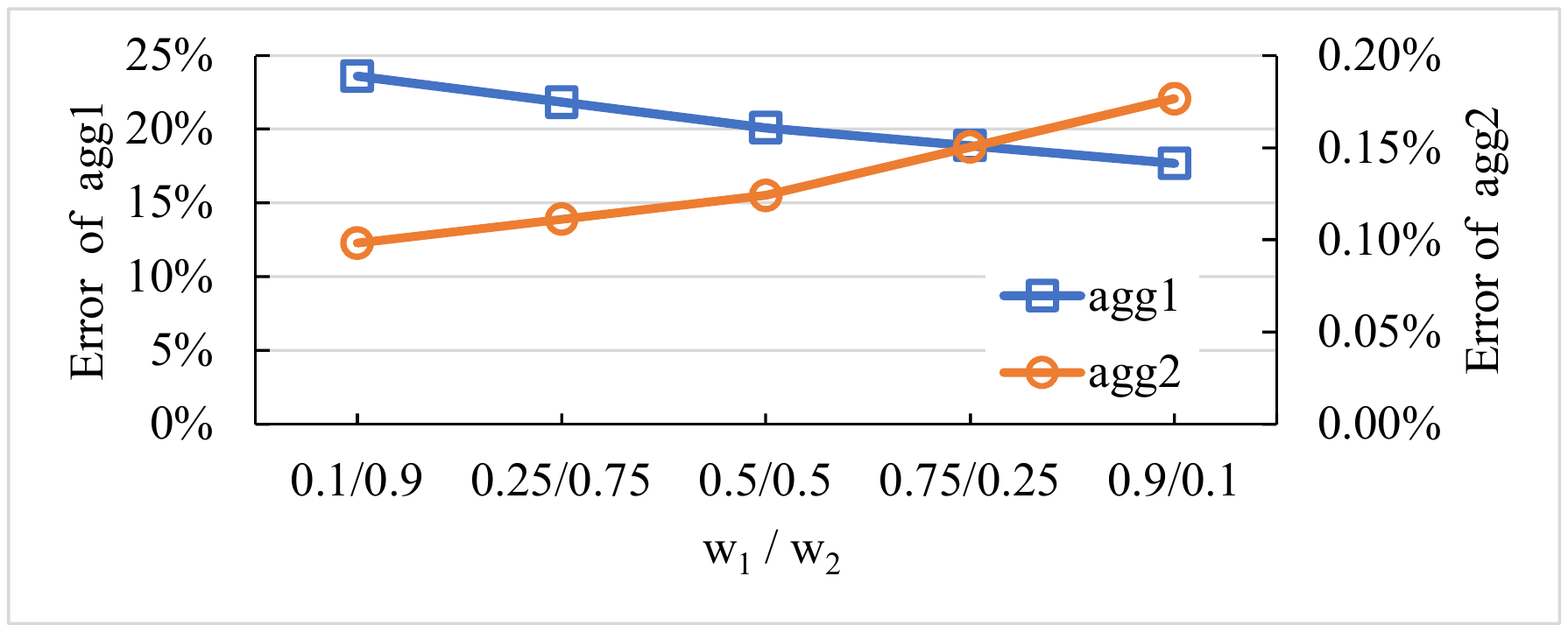}
		\caption{$1\%$ \cvopt sample answers query AQ\ref{query:openaq_masg} with weight settings.}
	\end{subfigure}
	\par\bigskip
	\begin{subfigure}{\columnwidth}
        \centering
		\includegraphics[width=\graphwidth]{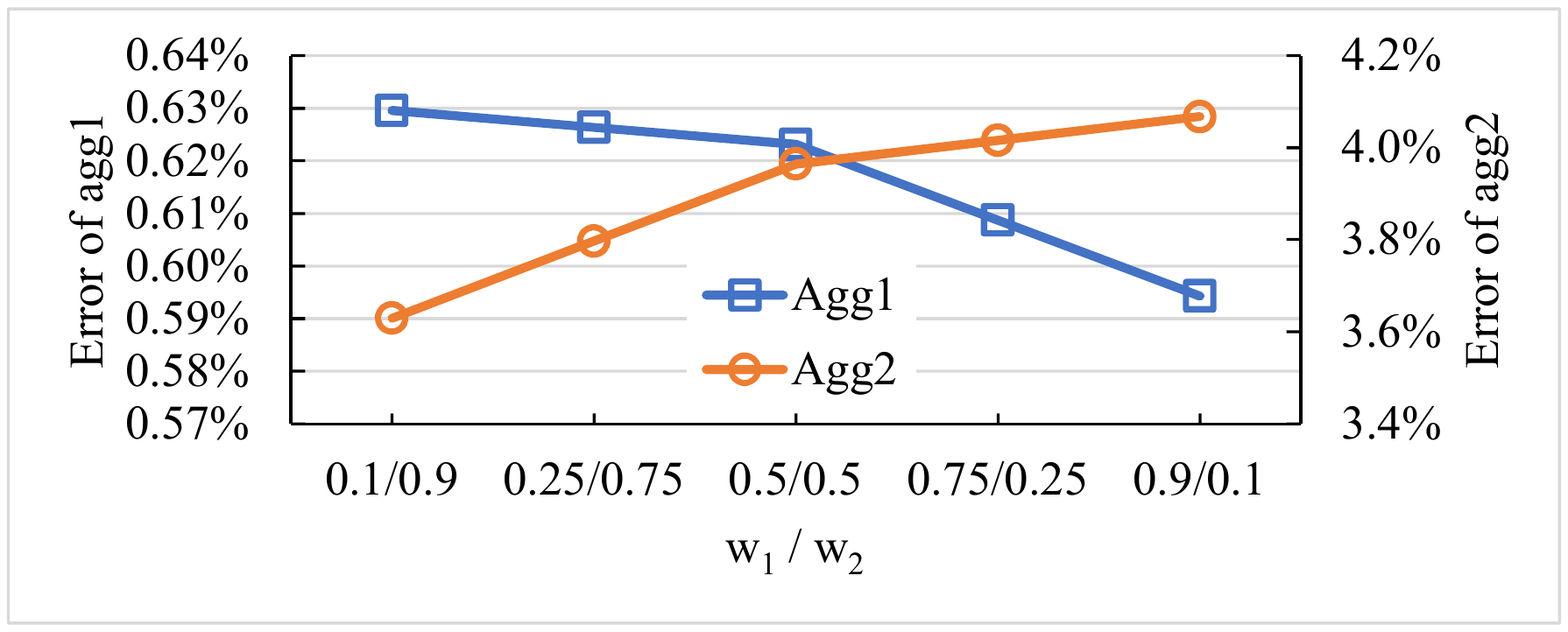}
		\caption{$5\%$ \cvopt sample answers query B\ref{query:trips_masg} with weight settings.}
	\end{subfigure}	
	\caption{Average errors, \cvopt, for different weight settings.}
\label{fig:masg_weighted}
\end{figure}

When multiple aggregates are involved, they are not necessarily
equally important to the user. \cvopt allows the user to assign a
greater weight to one aggregate over the others, leading to an
improved quality for the answer, based on the user's need.  We
conducted experiments with query AQ\ref{query:openaq_masg} and query
B\ref{query:trips_masg}. Each query has two aggregations, {\tt Agg1}
and {\tt Agg2}, with the weights denoted by $w_1$ and $w_2$,
respectively. We use \cvopt to draw 3 samples with different weighting
profiles:
$(w_1, w_2) = \big\{(0.1, 0.9), (0.5, 0.5), (0.9, 0.1)\big\}$, as the
user favors {\tt Agg2} over {\tt Agg1} in the first case and favors \textcolor{red}{
{\tt Agg1}} in the third case. The second case is equal to default,
non-weighted setting. Results are presented in
figure~\ref{fig:masg_weighted}. From the left to the right side, as
$w_1$ increases and $w_2$ decreases, the average error of {\tt Agg1}
decreases and of {\tt Agg2} increases. The results in both queries
show \cvopt's ability to create a sample that better fits the user's
priority. While previous heuristic works cannot systematically support weighted aggregates, we find this feature is practically useful in calibrating the sample.

\subsection{Sample's usability}
\label{exp:sensitivity}

\begin{figure}[t]
	\centering
    \begin{subfigure}{\columnwidth}
        \centering
		\includegraphics[width=\graphwidth]{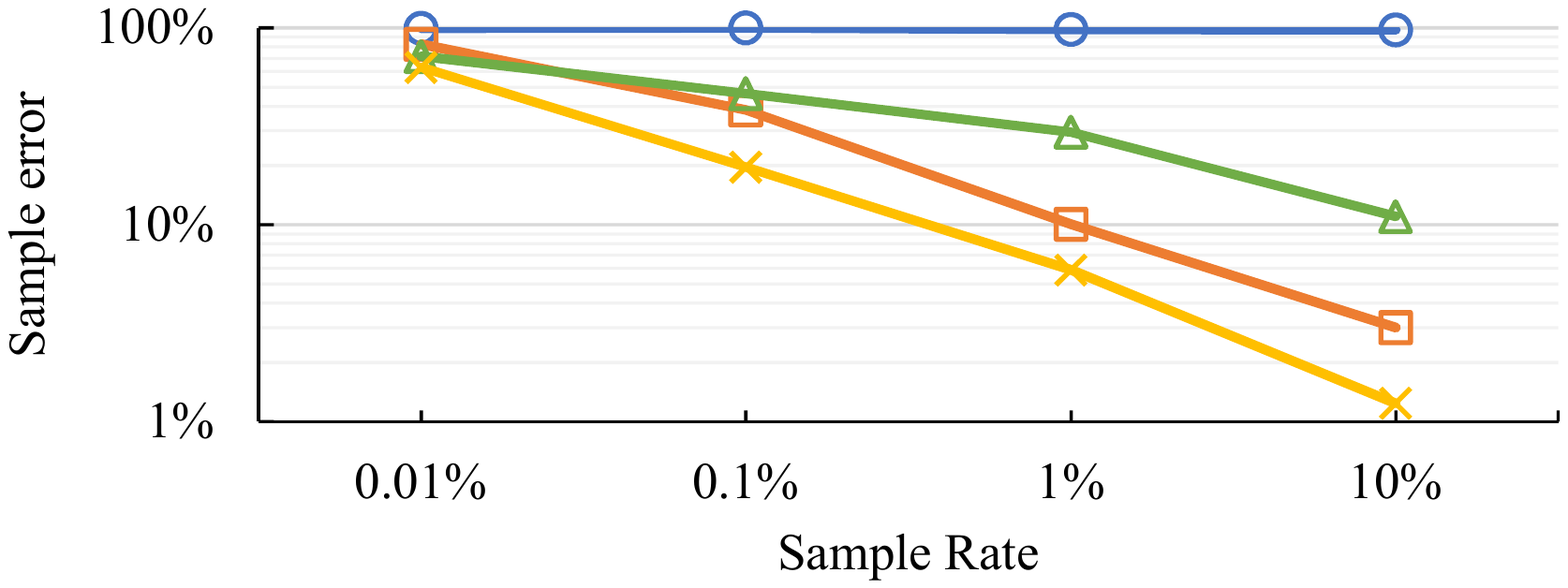}
		\caption{\masg query AQ\ref{query:openaq_masg} answered by samples with various sample rates.}
    \end{subfigure}
    \par\bigskip
	\begin{subfigure}{\columnwidth}
        \centering
		\includegraphics[width=\graphwidth]{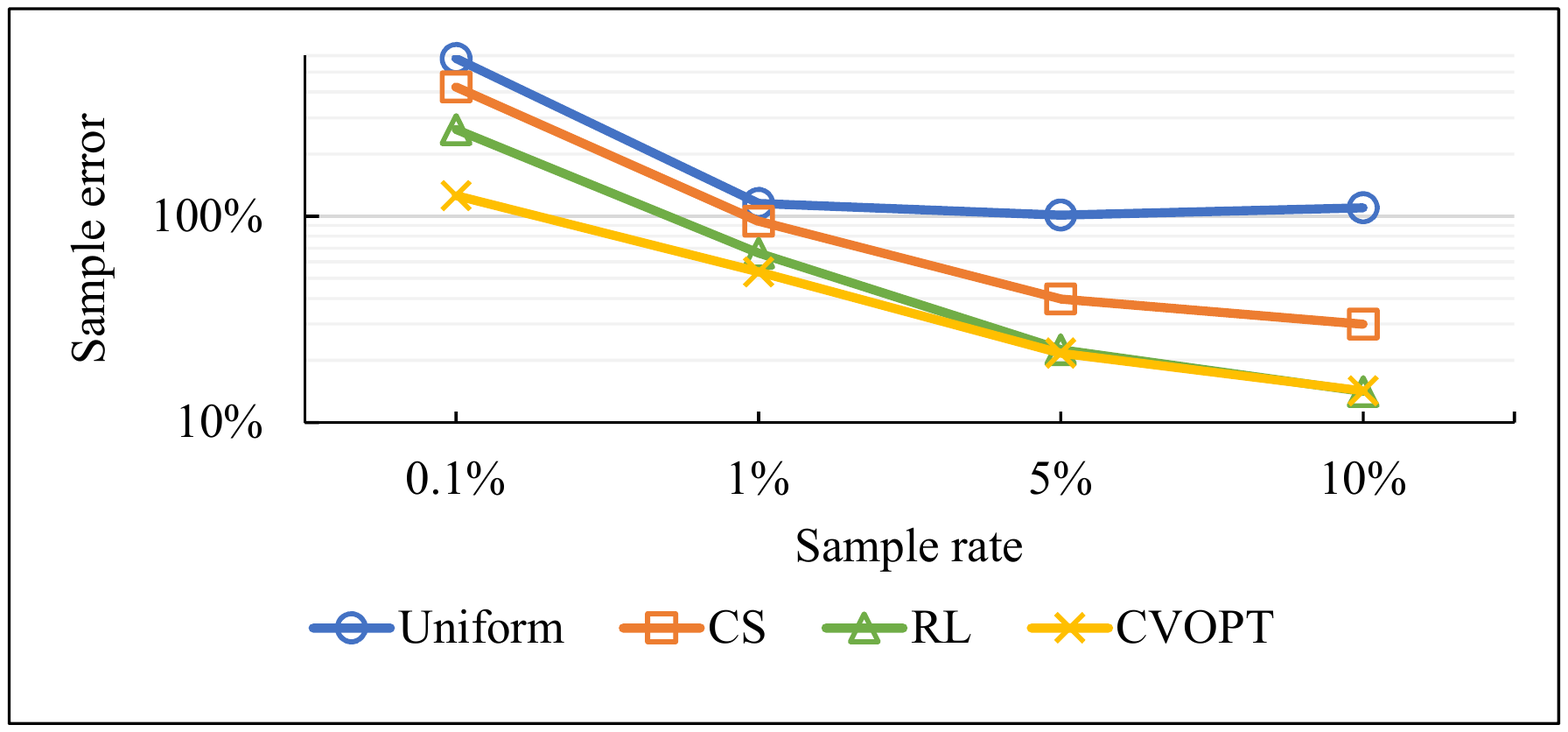}
		\caption{\sasg query B\ref{query:trips_sasg} answered by samples with various sample rates.}
	\end{subfigure}	
    \includegraphics[width=\graphwidth]{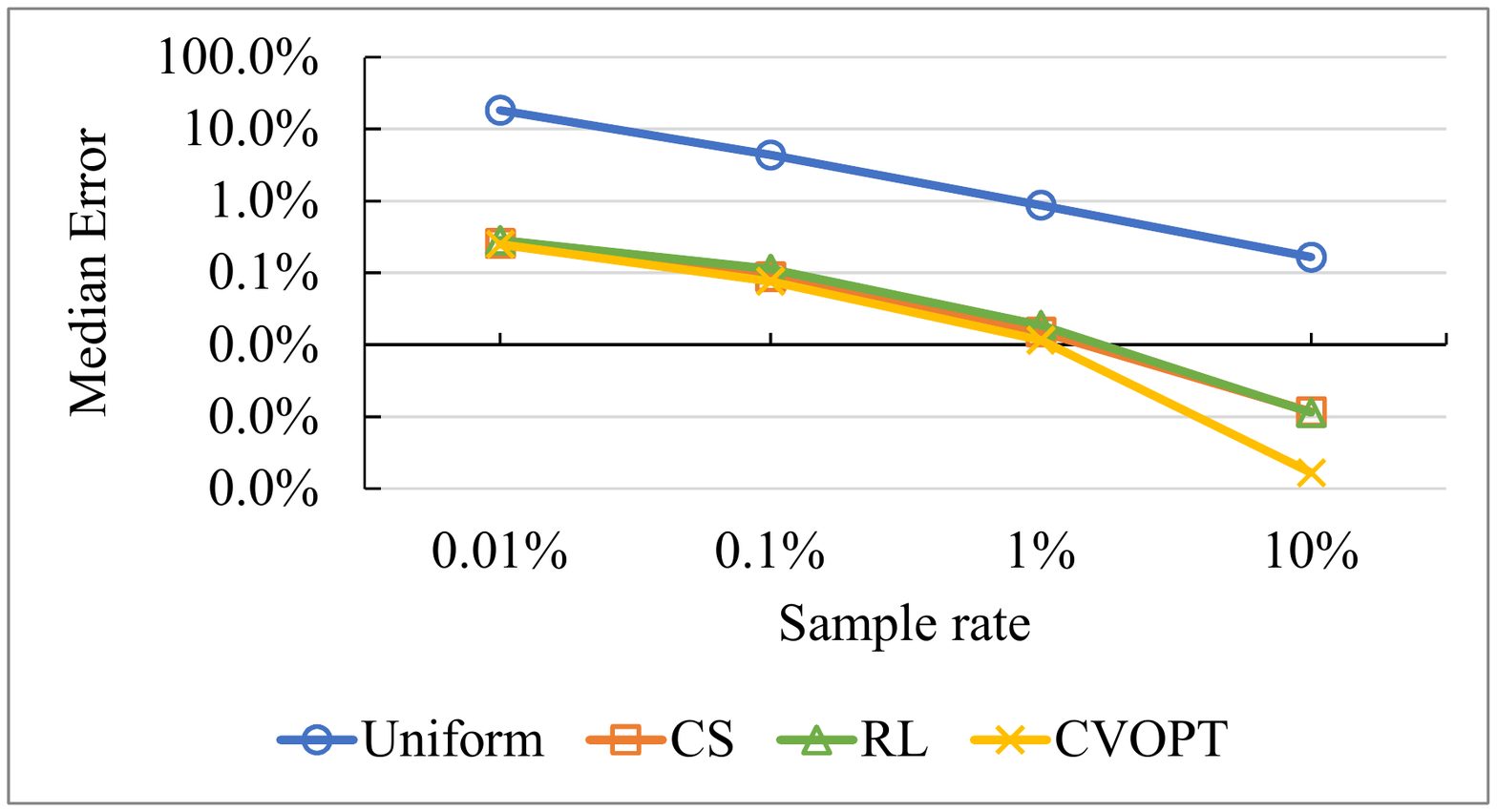}
    \caption{Sensitivity of maximum error to sample size.}
    \label{fig:sample_rate}
\end{figure}

{\color{red}{On Sample Rate:} }
It is well known that a higher sample rate generally improves the
sample's accuracy in query processing. Nevertheless, we compare
the accuracy of the
different methods for queries AQ\ref{query:openaq_masg} and B\ref{query:trips_sasg}
under different sample rates. 
Figure~\ref{fig:sample_rate} shows that \cvopt outperforms its peers
at nearly all sample rates of the study.

\begin{figure}[t]
	\begin{subfigure}{\columnwidth}
    	\centering
		\includegraphics[width=\graphwidth]{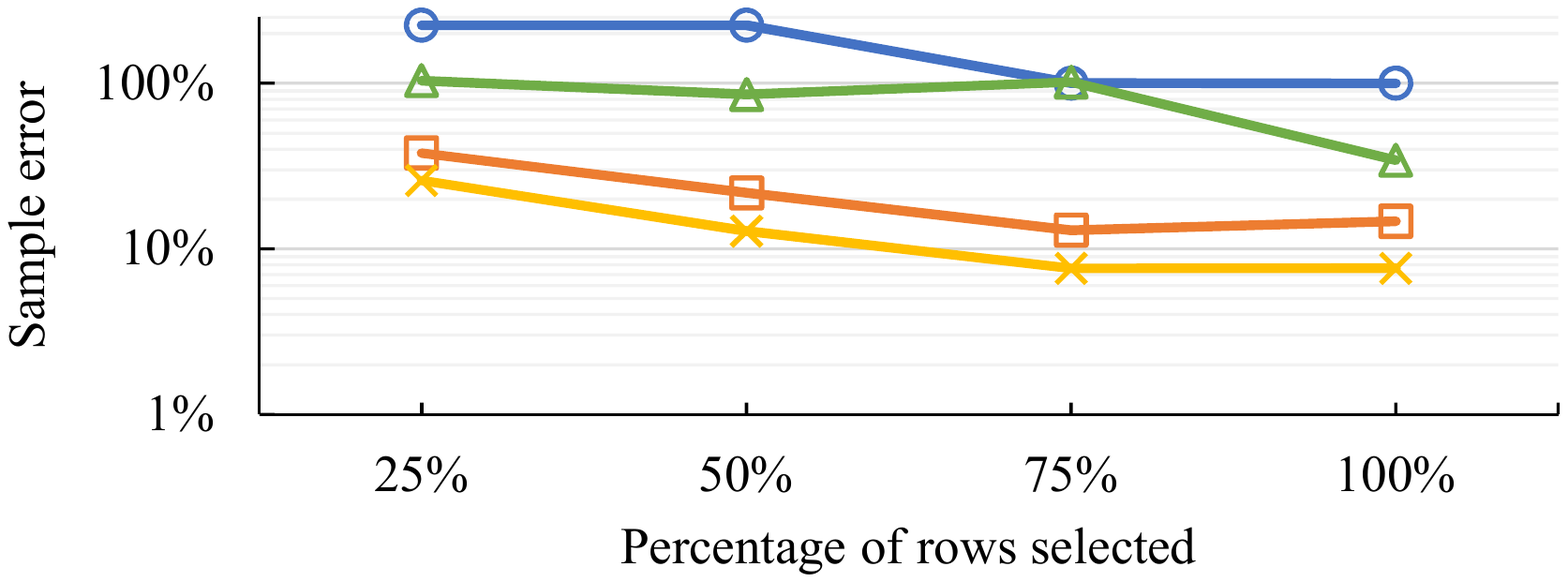}
		\caption{\sasg queries with different predicates on \openaq.}
	\end{subfigure}	
	\par\bigskip
	\begin{subfigure}{\columnwidth}
    	\centering
		\includegraphics[width=\graphwidth]{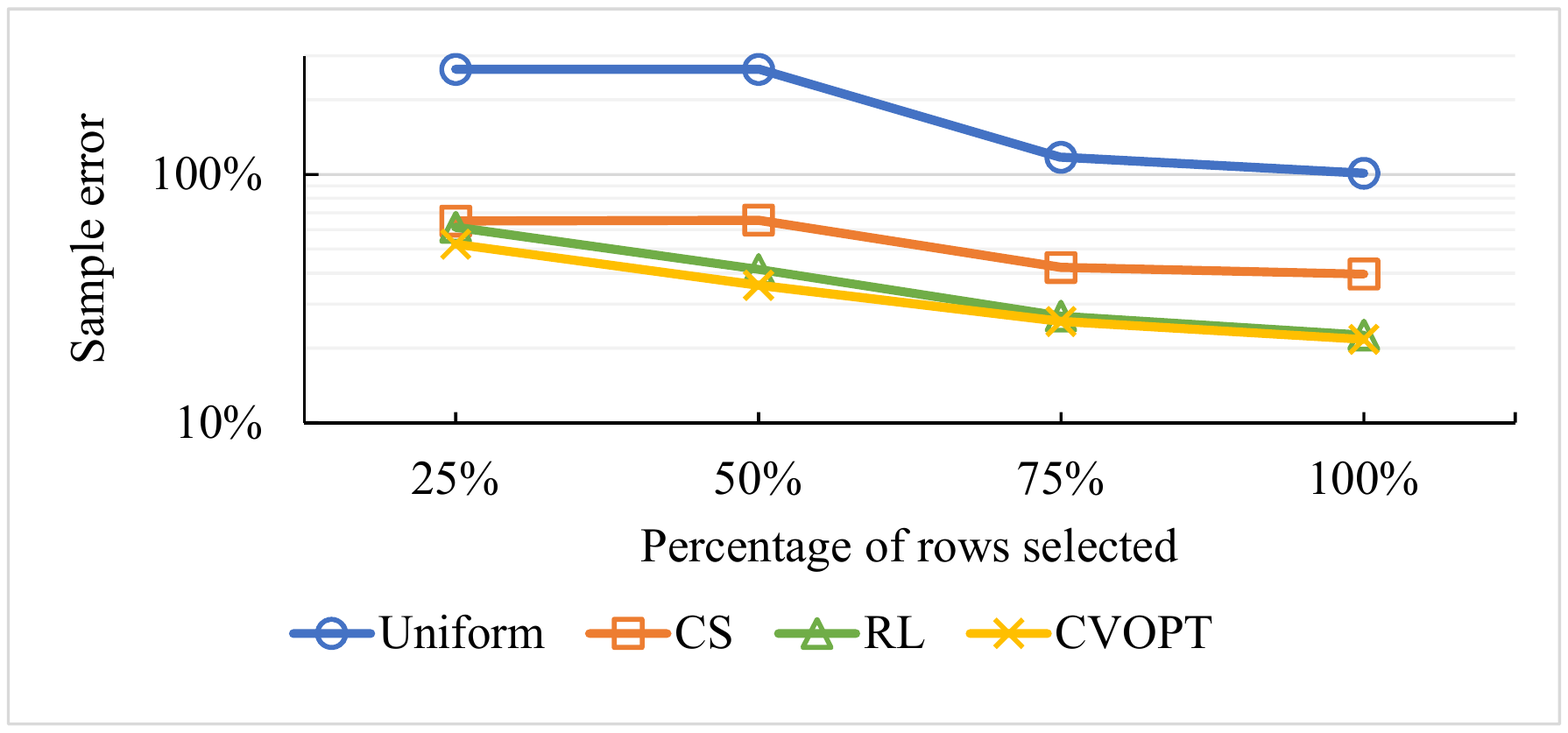}
		\caption{\sasg queries with different predicates on \bikes.}
		\label{fig:selectivity_openaq}
	\end{subfigure}	
    \includegraphics[width=\graphwidth]{exp/line_lable.pdf}
	\caption{\color{red} Maximum error of multiple queries with different predicates answered
          by one materialized sample, showing
the reusability of the sample}
	\label{fig:selectivity}
\end{figure}

\medskip 

{\color{red}{On Predicate Selectivity and Group-by Attributes:} }
Queries commonly come with
selection predicates, \ie the {\tt WHERE} clause in a SQL
statement. Since samples are pre-constructed, the same sample is used no matter what the
selection predicate is. {\color{red}We study the sample's usability on 
the selectivity of the predicate.} Query AQ\ref{query:openaq_sasg} has
a predicate: {\tt \textbf{WHERE HOUR}(local\_time) \textbf{BETWEEN}
  <low> \textbf{AND} <high>}.  By changing the filter
conditions, 
{\color{red} we generated 4 queries AQ\ref{query:openaq_sasg}.a,
  AQ\ref{query:openaq_sasg}.b, AQ\ref{query:openaq_sasg}.c and
  AQ\ref{query:openaq_sasg}, with selectivity of $25\%$,
  $50\%$, $75\%$ and $100\%$, respectively. 
All
  queries are answered by the materialized sample optimized for query
  AQ\ref{query:openaq_sasg}. 
}
\remove{
  To show the robustness of \cvopt, we
  applied predicate on the time of measurement, \texttt{local\_time},
  which is highly correlated with the \texttt{value} measurement. \todo{what
    does this sentence mean? Two measurements here and they are
    correlated ? what the two measurements? What does that mean by correlated ?} The $L_1$
  distance between the distributions of the grouped frequency
  \todo{what exactly is a group frequency? What is the def of the L1 distance of
    them? What is L1 distance is trying to show? How sensitive is it?
    Why the L1 distance shall be changing dramatically when the
    ``local time'' filter is changed? Give a simple example? Any
    insights why the sample still works well when this filter
    changes? The sample shall not be work well for an arbitrary
    predicate/filter change.} with and
  without predicate is covariated with predicate selectivity, \ie
  0.0144, 0.0356 and 0.1024 for AQ\ref{query:openaq_sasg}.c,
  AQ\ref{query:openaq_sasg}.b, AQ\ref{query:openaq_sasg}.a,
  respectively. It shows that the predicate affects not only the
  number of row selected but also the distribution of the data.
}
{\color{red}
  Similarly, we have queries B\ref{query:trips_sasg}.a,
  B\ref{query:trips_sasg}.b, B\ref{query:trips_sasg}.c, and
  B\ref{query:trips_sasg}, with controllable predicate parameters to
  table \bikes.  
} Figure~\ref{fig:selectivity} shows the results. Although
  the samples are able to serve different queries, we observe
  the effect of selectivity upon the accuracy.  Overall, the greater the
selectivity, the lesser is the error due to sampling. For each
predicate query, \cvopt has a lower error than \cs and \rl.\looseness=-1
{\color{red}
\begin{query}[b]
\begin{lstlisting}
SELECT	country, parameter, unit,
	AVG(value) average
FROM OpenAQ	WHERE latitude > 0
GROUP BY country, parameter, unit
\end{lstlisting}
\caption{\color{red}Average measurement for each parameter of the countries in northern hemisphere.}
\label{query:openaq_sasg_2}
\end{query}
\begin{query}[b]
\begin{lstlisting}
SELECT
    parameter, unit,
COUNT(IF(value > 0.5, 1, 0)) count
FROM {input_table}	WHERE country = "VN"
GROUP BY parameter, unit
\end{lstlisting}
\caption{\color{red}Count the number of times the measurement of each parameter is higher than 0.5 in Vietnam.}
\label{query:openaq_sasg_3}
\end{query}
\begin{table}[t]
	\centering
	\begin{tabular}{|c|cccccc|}
		\hline
		& AQ\ref{query:openaq_sasg}  & AQ\ref{query:openaq_sasg}.a & AQ\ref{query:openaq_sasg}.b & AQ\ref{query:openaq_sasg}.c & AQ\ref{query:openaq_sasg_2}  & AQ\ref{query:openaq_sasg_3}   \\ \hline
		\uniform & 98.4 & 21.0  & 21.4  & 18.0  & 99.6 & 100.0 \\
		\cs      & 2.5  & 5.8   & 2.9   & 2.8   & 3.9  & 0.9   \\
		\rl      & 5.4  & 9.5   & 6.9   & 5.6   & 4.3  & 3.5   \\
		\cvopt    & 1.5  & 4.4   & 2.4   & 1.9   & 2.3  & 0.8   \\ \hline
	\end{tabular}
	\caption{\color{red} Average error of multiple queries
          answered by one materialized sample, showing the reusability
          of the sample.}
	\label{tbl:multiple_queries}
\end{table}
%
Table~\ref{tbl:multiple_queries} shows the average error of a group of
six different queries using the materialized sample optimized for
AQ\ref{query:openaq_sasg}.  Note that 
 AQ\ref{query:openaq_sasg_2} and
AQ\ref{query:openaq_sasg_3} use different \texttt{\bf WHERE} clauses
that are also different from those used by 
AQ\ref{query:openaq_sasg} and AQ\ref{query:openaq_sasg}.*. 
AQ\ref{query:openaq_sasg_3} also uses a different set of
\texttt{\bf GROUP-BY} attributes.  For all six queries, \cvopt
performs well with good accuracy and is consistently better than other methods.  }

\subsection{Multiple Group-by Query}
\label{exp:samg}
\begin{query}[b]
\renewcommand{\ttdefault}{pcr}
\begin{lstlisting}
SELECT country, parameter, SUM(value)
FROM OpenAQ
GROUP BY country, parameter WITH CUBE
\end{lstlisting}
\caption{Single aggregate, multiple group-by, \openaq.}
\label{query:openaq_samg}
\end{query}
\begin{query2}[b]
\renewcommand{\ttdefault}{pcr}
\begin{lstlisting}
SELECT  from_station_id, year, 
        SUM(trip_duration)
FROM Bikes	WHERE age > 0
GROUP BY from_station_id, year WITH CUBE
\end{lstlisting}
\caption{Single aggregate, multiple group-by, \bikes}
\label{query:trips_samg}
\end{query2}

\begin{query}[b]
\renewcommand{\ttdefault}{pcr}
\begin{lstlisting}
SELECT country, parameter, 
       SUM(value), SUM(latitude)
FROM OpenAQ
GROUP BY country, parameter WITH CUBE
\end{lstlisting}
\caption{Multiple aggregate, multiple group-by, \openaq.}
\label{query:openaq_mamg}
\end{query}
\begin{query2}[b]
\renewcommand{\ttdefault}{pcr}
\begin{lstlisting}
SELECT  from_station_id, year, 
        SUM(trip_duration), SUM(age)
FROM Bikes
GROUP BY from_station_id, year WITH CUBE
\end{lstlisting}
\caption{Multiple aggregate, multiple group-by, \bikes}
\label{query:trips_mamg}
\end{query2}

\texttt{WITH CUBE} is an extension of group-by clause in SQL that generates multiple grouping sets of given attributes in a single query. For example, \texttt{\textbf{GROUP BY} A, B \textbf{WITH CUBE}} will generate grouping sets of \texttt{(A, B)}, \texttt{(A)}, \texttt{(B)}, and \texttt{()}. \texttt{WITH CUBE} is powerful in helping user to easily and efficiently compute aggregations with a large number of combinations of group-by.
We conduct experiments with \samg queries AQ\ref{query:openaq_samg} and B\ref{query:trips_samg} and \mamg queries AQ\ref{query:openaq_mamg} and B\ref{query:trips_mamg}. The queries have the grouping sets of two attributes for multiple group-by. \openaq has 38 countries and 7 parameters, so cube with these two attributes will generate upto 312 groups. \bikes has 619 stations and 3 years of collection, and therefore cube with \textit{from\_station\_id} and \textit{year} leads upto 2480 groups.

\begin{figure}[t]
	\centering
	\includegraphics[width=\graphwidth]{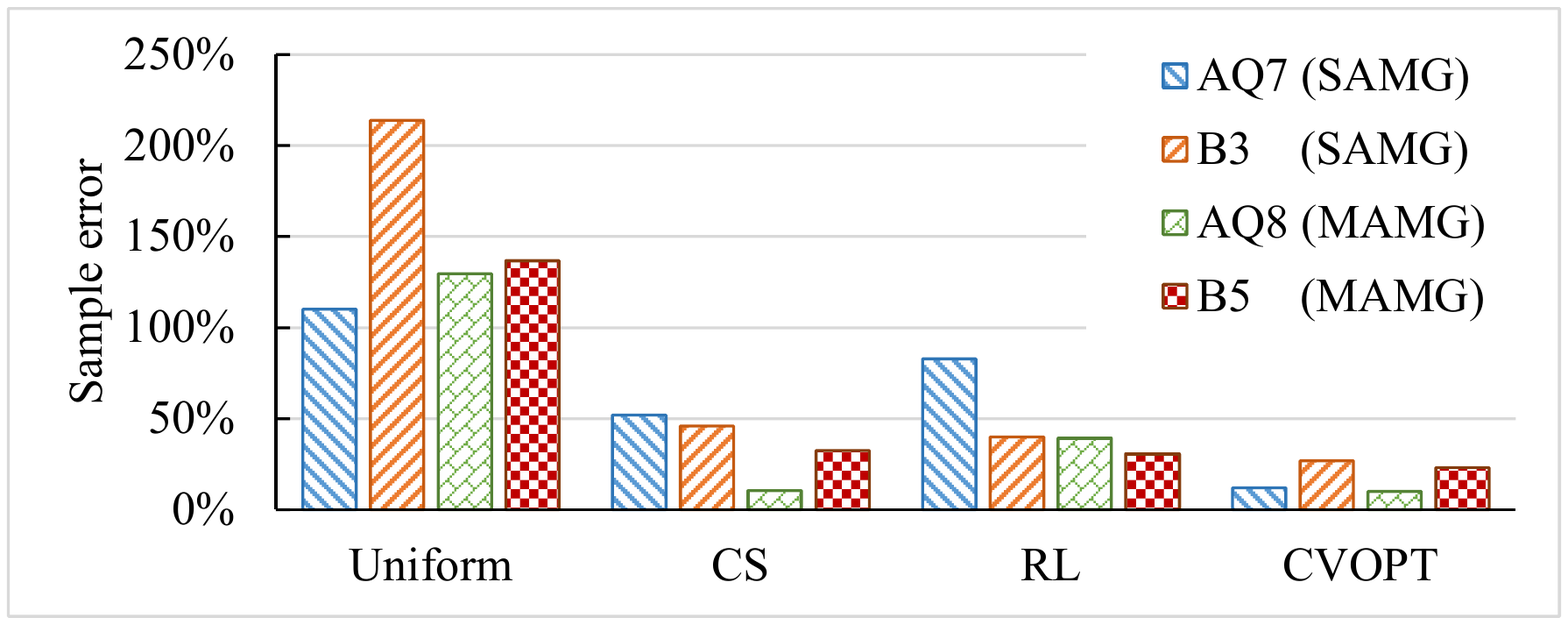}
	\caption{Maximum error of {\tt CUBE} group-by queries.} 
	\label{fig:samg_per}
\end{figure}

\remove{
\begin{figure}[t]
	\centering
	\includegraphics[width=\graphwidth]{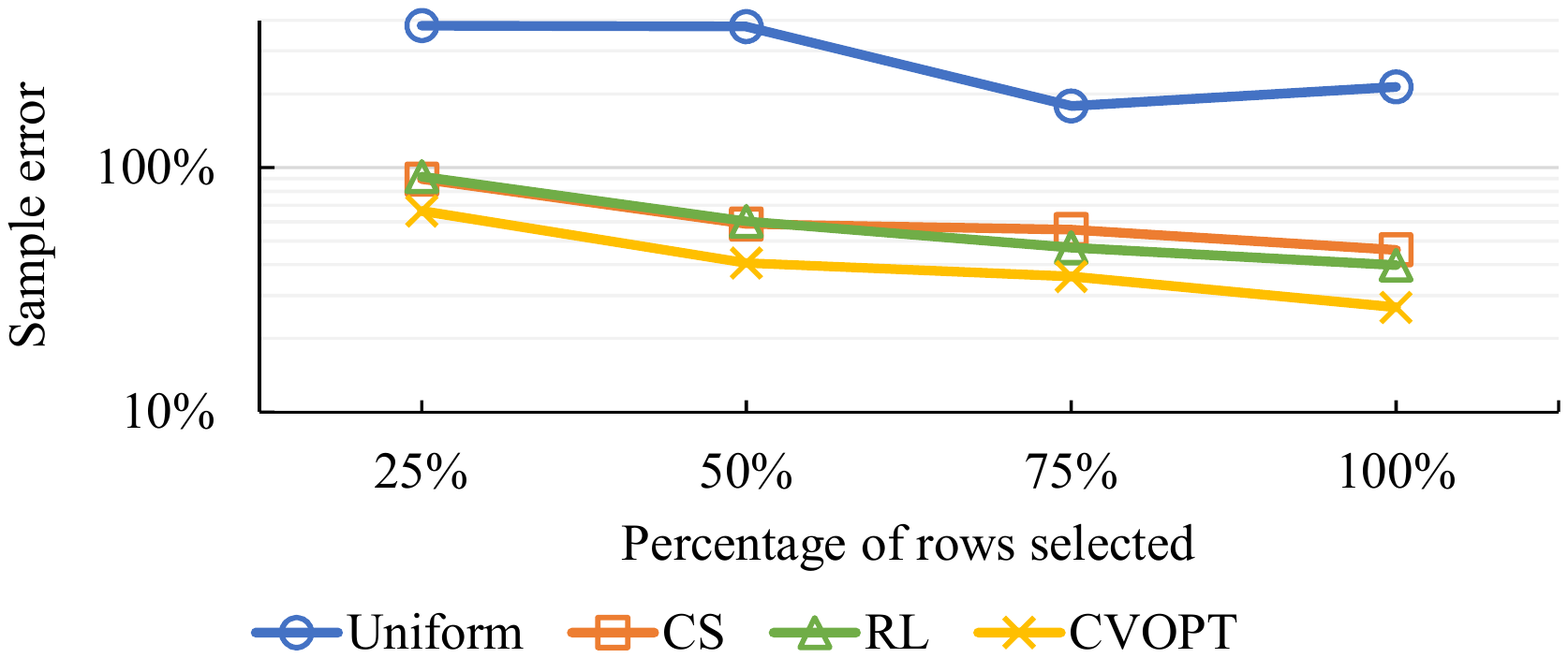}	
	\caption{Impact of predicate selectivity, ${99}^{th}$ percentile error, query B\ref{query:trips_samg}}
	\label{fig:samg_sel}
\end{figure}

\begin{figure}[t]
	\centering
	\includegraphics[width=\graphwidth]{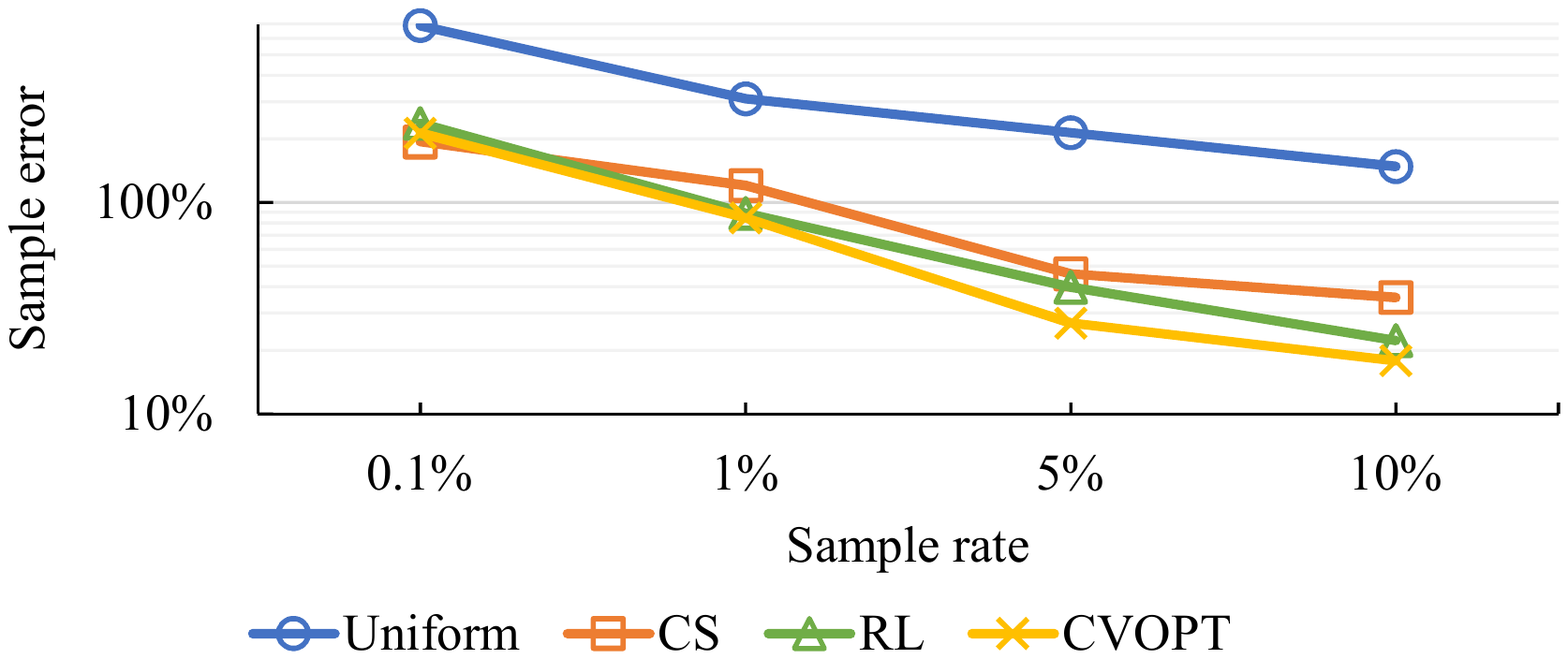}
	\caption{Impact of sample rate, ${99}^{th}$ percentile error, query B\ref{query:trips_samg}}
	\label{fig:samg_sample_rate}
\end{figure}
}

All methods \rl, \cs, and \cvopt, can sample in the presence of multiple groupings; for \cs this is the scaled congressional sampling method and for \rl it is the hierarchical partitioning method. Both \cs and \rl adopt a heuristic approach, which we implemented as described in their algorithm.
Accuracies of different samplers are shown in Figure~\ref{fig:samg_per}. We note that \cvopt performs significantly better than \uniform and \rl and is consistently better than \cs.




\remove{
We also conducted experiments to measure the sensitivity of sample quality to the sample rate and predicate selectivity, results shown in
Figures~\ref{fig:samg_sel} and~\ref{fig:samg_sample_rate}. 
\todo{while all other figs focus on max error, why do these two figs focus on 99th error?}
}

{\color{red}
\subsection{CPU Time}
\label{exp:time}

%

\begin{table}[]
\centering
\begin{tabular}{|c|c|c|c|c|}
\hline
\multirow{2}{*}{} & \multicolumn{2}{c|}{\openaq (40~GB)} & \multicolumn{2}{c|}{\openaqtb~(1~TB)} \\ \cline{2-5} 
                          & Precomputed     & Query     & Precomputed      & Query        \\ \hline
\texttt{Full Data} & -               & 2881.68   & -           & 60565.49      \\
\uniform             & 913.95      & 40.31       & 9060.53	  &  294.40     \\
\sps                   & 2309.99     & 44.92       & 58119.91  &  821.46    \\
\cs                		& 3854.61     & 55.87       & 87036.11  &  1091.60    \\
\rl                		  & 4311.36    & 54.43       & 99365.67  &  952.37    \\
\cvopt                & 4263.07    & 59.99       & 98081.49  &  1095.18   \\ \hline
\end{tabular}
\caption{\color{red} The sum of the CPU time (in seconds) at all nodes for sample precomputing and
  query processing with 1\% sample for query
  AQ\ref{query:openaq_masg_complex} over \openaq and \openaqtb.}
\label{tbl:time}
\end{table}

We measure the CPU time cost for sample precomputation and query
processing, as the sum of the CPU cost at
all nodes in a Hadoop cluster we used. The cluster includes a master
node and 3 slave nodes. Each node has 2.70~GHz~x4 CPU with 10~GB
memory. For timing purposes, we generate a 1~TB dataset, \openaqtb, by duplicating \openaq 25 times.
\remove{
The main messages are: 1) The overhead caused by sample
precomputation is negligible, compared to the time we save from
sample-based query processing. 2) \cvopt has nearly the same
performance as its best competitors \cs and \rl, on
both sample precomputation and query response, while maintaining much
better query response accuracy.
}


Table~\ref{tbl:time} shows the CPU time for precomputing  the
$1\%$ samples and query processing to answering
query AQ\ref{query:openaq_masg_complex}. 
Due to the two scans required, stratified random samples, \sps,
\cs, \rl and \cvopt, are more expensive than \uniform, which requires only 
a single scan through data. 

Query processing using samples could reduce CPU usage by $50$ to $300$ times. 
\cs and \sps have slightly less CPU time. $\rl$ has comparable CPU
time cost. Since \uniform missed a number of groups, its query time is smaller.
\cvopt consumed only about $2\%$ of the CPU compared to query over the base table, reducing
the CPU time from $0.8$ hour to less than a minute for the 40GB
\openaq, and from $17$ hours to $18$ minutes for the 1TB
\openaqtb.  

For both data sets, the precomputation time of \cvopt is about $1.5$x
the cost of running the original query on the full data, indicating
that with only \emph{two} queries, the precomputation time can be amortized to be cheaper than running the
original queries on the full data.
}

\subsection{Experiments with \cvoptinf}
\label{exp:inf}

\begin{figure}[t]

	\centering
	\includegraphics[width=\graphwidth]{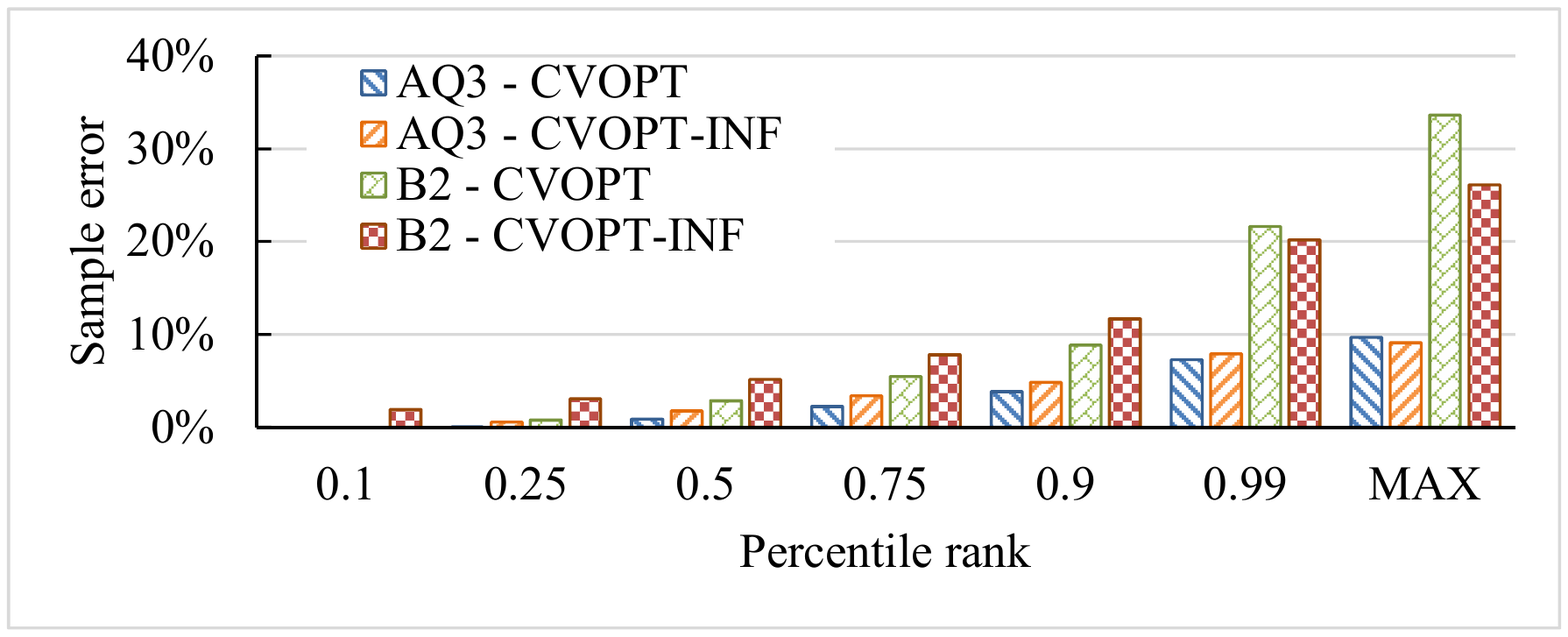}
	\caption{Comparison accuracy from \cvopt and \cvoptinf for \sasg queries AQ\ref{query:openaq_sasg} and B\ref{query:trips_sasg}.}	
	\label{fig:inf_per}
\end{figure}

Our experiments show that \cvopt, which optimizes for the $\ell_2$ norm of the CVs, leads to smaller errors at higher percentiles than \cs and \rl. We now consider \cvoptinf, which minimizes for the maximum of CVs of all estimates ($\ell_{\infty}$ norm). Our results on the accuracy of \cvoptinf on queries AQ\ref{query:openaq_sasg} and B\ref{query:trips_sasg} are shown in Figure~\ref{fig:inf_per}. Consistent with theoretical predictions, \cvoptinf has a lower maximum error than \cvopt on both queries. At the same time, \cvoptinf has a worse error than \cvopt at the $90^{th}$ percentile and below. Overall, this shows that \cvoptinf can be considered when the user is particularly sensitive to the maximum error across all groups. Otherwise, \cvopt with $\ell_2$ optimization provides robust estimate for a large fraction of the groups, with a small error across a wide range of percentiles.
	

\section{Other Related Work}
\label{sec:other-related}
Random sampling has been widely used in approximate query processing, for both static~\cite{Chaudhuri2017, KeYi17, SKT2012, BCD-SIGMOD2003, Coch77:book,Lohr-book2009, YTill-book1997} and streaming data~\cite{HentschelHT18, HAAS2016, Zhang:2016, Ahmed2017}. Uniform sampling~is~simple and can be implemented efficiently (e.g. using reservoir sampling~\cite{Vitter-sampling-focs83}), but does not produce good estimators for groups of low volume or large variance. 
Ganti et al.~\cite{GLR-VLDB2000} address low selectivity queries using workload-based sampling, such that a group with a low selectivity is sampled as long as the workload includes queries involving that group. Different techniques have been used in combination with sampling, such as indexing~\cite{sampleandseek, Wang:2015, CDMN01} or aggregate precomputation~\cite{AQP}.

Chaudhuri et al.~\cite{CDN-TODS2007} formulate approximate query processing as an optimization problem. Their goal is to minimize the $\ell_2$ norm of the relative errors of all queries in a given workload. Their approach to group-by queries is to treat every group derived from every group-by query in the workload as a separate query. In doing so, their technique does not handle overlap between samples suited to different groups. In contrast, our framework considers the overlaps and interconnections between different group-by queries in its optimization. 

Kandula et al.~\cite{KSVOGCD-SIGMOD2016} consider queries that require multiple passes through the data, and use random sampling in the first pass to speed up subsequent query processing. This work can be viewed as query-time sampling while ours considers sampling before the full query is seen. Further, \cite{KSVOGCD-SIGMOD2016} does not provide error guarantees for group-by queries, like we consider here. A recent work~\cite{edbt19} has considered stratified sampling on streaming and stored data, addressing the case of full-table queries using an optimization framework. This work does not apply to group-by queries.

\remove{
Different variation of random sampling techniques have been invented for different purposes, such as weight-based sampling~\cite{ES-IPL2006}, distinct sampling~\cite{BJKST02}, sampling from the a sliding window~\cite{GT02}, and time-decayed sampling~\cite{CTX-SDM},~\textit{etc.}
Stratified sampling is a popular technique for sampling when there is a natural grouping of data into different ``strata'', and variance-optimal allocation of samples to different strata has been studied in the statistics literature (see~\cite{Coch77:book}). In particular, the Neyman allocation~\cite{Neyman1934} is optimal for full population queries. Chaudhuri et al.~\cite{CDN-TODS2007} uses Neyman allocation and a given workload to optimize the approximate query answering. 
A recent works~\cite{fullversion, edbt19} have considered stratified sampling on data streams, and also considers cases when some of the strata may be ``bounded'', i.e. have a small number of data points. Meng~\cite{Meng-ML2013} considers SRS using the population-based allocation. Al-Kateb et al.~\cite{ALW-SSDBM2007} considers streaming SRS using power allocation. Lang et al.~\cite{LLS15} considered using machine learning method to determine the probability of sampling each element over a static data set of groups. 

All these work that use SRS for approximate query processing focus on the optimization that uses an error metric on the whole population, whereas the goal of our proposal is to well serve the queries for EVERY group. The congressional sampling  by Acharya et al.~\cite{AGP00} is a prior efforts towards this goal. Their technique is basically a combination of the uniform allocation and the population based allocation in the stratified random sampling. However, their work is a heuristic with no provable results and guarantee, and the goal of their proposal is to serve one group-by query well. 
}
	
	\section{Conclusion}
	We presented \cvopt, a framework for sampling from a database to accurately answer group-by queries with a provable guarantee on the quality of allocation, measured in terms of the $\ell_2$ (or $\ell_{\infty}$) norm of the coefficient of variation of the different per-group estimates. Our framework is quite general, and can be applied anytime it is needed to optimize jointly for multiple estimates. Choosing the $\ell_2$ of the CVs, larger errors are penalized heavily, which leads to an allocation where errors of different groups are concentrated around the mean. Choosing the $\ell_{\infty}$ of the CVs leads to a lower value of the maximum error than the $\ell_2$, at the cost of a slightly larger mean and median error. There are many avenues for further research, including (1)~incorporating joins into the sampling framework (2)~exploring $\ell_{p}$ norms for values of $p$ other than $2,\infty$, (3)~handling streaming data.
	
	\balance
	
	{\scriptsize
		\bibliographystyle{abbrv}
		\bibliography{ms}
	}
	\balance
	
	
\end{document}